\documentclass[11pt]{article}

\usepackage[bottom]{footmisc}
\usepackage{amssymb}
\usepackage{amsmath}
\usepackage{amsthm}
\usepackage{amsfonts,amsthm,amsmath,amssymb}

\usepackage[
    disableredefinitions,
    classfont=bold,
    full,
    langfont=caps,
    funcfont=roman
]{complexity}

\usepackage[mathscr]{euscript}
\usepackage[margin=2cm]{geometry}
\usepackage{bm}
\usepackage[
	backend=bibtex8,
	style=alphabetic,
	natbib=true,
	maxnames=10,
	maxitems=10,
	maxcitenames=3,
	maxalphanames=4,
	minalphanames=3,
	labelalpha=true,
]{biblatex}

\usepackage{dsfont}

\usepackage{hyperref}
\hypersetup{
    citecolor=red,
    colorlinks=true,
    linkcolor=blue,
    filecolor=magenta,      
    urlcolor=cyan,
}

\usepackage{thmtools}
\usepackage{thm-restate}
\usepackage[noabbrev,capitalise]{cleveref}
\usepackage{mathtools}

\declaretheorem[numberwithin=section]{thm}
\declaretheorem[sibling=thm,name=Theorem]{theorem}
\declaretheorem[sibling=thm,name=Lemma]{lemma}
\declaretheorem[sibling=thm,name=Corollary]{corollary}

\declaretheorem[numberwithin=section,name=Definition]{definition}

\usepackage{soul}
\usepackage{xcolor}
\usepackage{wasysym}
\usepackage{comment}
\usepackage{marvosym}
\usepackage[shortlabels]{enumitem}
\usepackage{tikz}
\usetikzlibrary{calc}

\usepackage{braket}

\mathchardef\mhyphen="2D
\newcommand{\Ex}{\mathop{\mathbb{E}}}

\newcommand{\Prob}[2]{{\Pr_{#1}\left[ {#2} \right]}}

\newcommand{\cProb}[3]{{\Pr_{#1}\left[ \left. #3 \;\right\vert #2 \right]}}

\renewcommand{\Ex}{\mathop{\mathbb{E}}}
\newcommand{\Expect}[2]{{\Ex_{#1}\left[ {#2} \right]}}

\newcommand{\cExpect}[3]{{\Ex_{#1}\left[ \left. #3 \;\right\vert #2 \right]}}

\newcommand{\eps}{\varepsilon}
\newcommand{\inr}{\in_R}

\newcommand{\defeq}{\stackrel{def}{=}}
\newcommand{\defeqq}{\coloneqq}

\newcommand{\qlrq}{\quad \Longleftrightarrow \quad}

\mathchardef\mhyphen="2D

\newcommand{\wh}[1]{\widehat{#1}}
\newcommand{\wt}[1]{\widetilde{#1}}
\newcommand{\ol}[1]{\overline{#1}}

\newcommand{\cI}{\mathcal{I}}

\renewcommand{\cP}{\mathcal{P}}

\renewcommand{\C}{\mathbb{C}}

\newcommand{\N}{\mathbb{N}}
\renewcommand{\R}{\mathbb{R}}

\crefname{claim}{claim}{claims}
\Crefname{claim}{Claim}{Claims}
\crefname{algorithm}{algorithm}{algorithms}
\Crefname{algorithm}{Algorithm}{Algorithms}
\crefname{algocf}{algorithm}{algorithms}

\makeatletter
\if@cref@capitalise
\crefname{claim}{Claim}{Claims}
\else
\crefname{claim}{claim}{claims}
\fi
\makeatother

\newcommand{\norm}[1]{\left\| #1 \right\|}

\renewcommand{\set}[1]{{\left\{ #1 \right\}}}

\newcommand{\sett}[2]{\left\{ #1 \;\middle\vert\; #2 \right\}}

\newcommand{\abs}[1]{{\left\lvert{#1}\right\rvert}}

\newcommand{\supp}{\mathrm{Supp}}

\newcommand{\power}[1]{\{0,1\}^{#1}}
\newcommand{\pown}{\power{n}}
\newcommand{\powern}{\pown}

\DeclarePairedDelimiter{\parentheses}{\lparen}{\rparen}
\newcommand{\ps}[1]{\parentheses*{#1}}
\newcommand{\pss}[1]{(#1)}

\newcommand{\pfrac}[2]{\parentheses*{\frac{#1}{#2}}}

\newcommand{\ceil}[1]{\left\lceil #1 \right\rceil}
\newcommand{\floor}[1]{\left\lfloor #1 \right\rfloor}

\newcommand{\ketbra}[2]{\ket{#1}\bra{#2}}
\newcommand{\ketbraa}[1]{\ketbra{#1}{#1}}

\renewcommand{\braket}[2]{\left\langle{#1} \middle|{#2}\right\rangle}

\newclass{\TISP}{TISP}
\newclass{\RTISP}{RTISP}

\newclass{\RTISPs}{R^*TISP}
\newclass{\BPTISP}{BPTISP}
\newclass{\BPTISPs}{BP^*TISP}
\newclass{\prBPTISPs}{prBP^*TISP}

\newclass{\BPSs}{BP^*SPACE}
\newclass{\RSs}{R^*SPACE}

\newclass{\prBPSs}{prBP^*SPACE}
\newclass{\prRSs}{prR^*SPACE}

\newclass{\searchBPP}{SearchBPP}
\newclass{\searchP}{SearchP}

\renewclass{\promiseBPP}{prBPP}
\renewclass{\promiseRP}{prRP}
\newclass{\prRTISPs}{pr\RTISPs}

\newclass{\prZPL}{prZPL}
\newclass{\prBPL}{prBPL}
\newclass{\prRL}{prRL}
\newclass{\prL}{prL}

\newclass{\searchNL}{SearchNL}
\newclass{\searchBPL}{SearchBPL}
\newclass{\searchBPLs}{SearchBP^*L}
\newclass{\searchZPLs}{SearchZP^*L}
\newclass{\searchRLs}{SearchR^*L}
\newclass{\searchRL}{SearchRL}
\newclass{\searchL}{SearchL}

\newclass{\RLs}{R^{*}L}
\newclass{\BPLs}{BP^{*}L}
\newclass{\ZPLs}{ZP^{*}L}

\newclass{\BPPLs}{BP^*PL}
\newclass{\RPLs}{R^*PL}
\newclass{\prBPPLs}{prBP^*PL}
\newclass{\prRPLs}{prR^*PL}

\newclass{\prBPLs}{pr\BPLs}
\newclass{\prRLs}{pr\RLs}

\newclass{\searchBPSACE}{Search\BPSPACE}
\newclass{\searchDSPACE}{Search\DSPACE}

\newclass{\almost}{almost\mhyphen}
\newclass{\almostLw}{almost\textsubscript{w}\mhyphen L}
\newclass{\almostLs}{almost\textsubscript{s}\mhyphen L}
\newclass{\almostL}{\almost L}

\newclass{\recursiveset}{R}
\newclass{\MAL}{MAL}

\renewcommand{\P}{\mathbf{P}}

\usepackage{physics}
\let\abs\saveabs

\usepackage{amsmath,amssymb}
\usepackage{tikz}
\usetikzlibrary{positioning,fit,backgrounds,calc,arrows.meta,decorations.pathmorphing}

\usepackage{yhmath}
\usepackage{bbm}

\usepackage{float}
\usepackage[ruled,vlined]{algorithm2e}
\SetKwFor{For}{for}{do}{}
\SetKwIF{If}{ElseIf}{Else}{if}{then}{else if}{else}{}
\SetKwInput{KwWitness}{Witness}
\SetKw{Accept}{accept}
\SetKw{Reject}{reject}
\SetKw{Continue}{continue}
\SetKw{Print}{print}

\usepackage[textsize=tiny]{todonotes}

\usepackage[normalem]{ulem}

\newcommand{\Cones}{\mathsf{CONES}}
\newcommand{\ProjUSLHinstance}{\ensuremath{\mathsf{ProjUSLH}_{q}(k,d)}}
\newcommand{\ProjUSLHinstancee}{\ensuremath{\mathsf{ProjUSLH}}}

\newclass{\stoqMA}{StoqMA}
\newlang{\LH}{LH}

\newlang{\ProjUSLH}{ProjUSLH}

\newlang{\YES}{Yes}
\newlang{\NO}{No}
\crefname{AlgoLine}{line}{lines}
\Crefname{AlgoLine}{Line}{Lines}

\begin{document}
\title{Frustration Free Stoquastic Local Hamiltonian \\ with Sub-Constant Gap is in $\NP$}
\author{
    Eshan Chattopadhyay\thanks{Cornell University. Email: \texttt{eshan@cs.cornell.edu}.}
    \and
    Oren Renard\thanks{Cornell University. Email: \texttt{oren@cs.cornell.edu}.}
    \and
    Nicholas Spooner\thanks{Cornell University. Email: \texttt{nspooner@cornell.edu}.}
}
\maketitle

\begin{abstract}
We continue the study of the Stoquastic Local Hamiltonian problem, a physically motivated restriction of the $\QMA$-complete Local Hamiltonian problem \cite{KSV02}. For the $\beta$-gapped, frustration-free case, \citet{BBT06} showed that the problem is $\MA$-complete when $\beta= \tfrac{1}{\poly\ps{n}}$. \citet{AG19} derandomized this algorithm and proved membership in $\NP$ for constant gap $\beta = \Omega\ps{1}$.

We present an improved algorithm and analysis, establishing membership in $\NP$ even when $\beta = \Omega \ps{\frac{1}{{\log\log n}}}$.
We complement our result with an explicit example demonstrating why the analysis does not extend directly to $\beta = o(1/\log\log n)$.
\end{abstract}

\tableofcontents
\section{Introduction}
The $k$-Local Hamiltonian problem is the canonical complete problem for $\QMA$ \cite{KSV02}, the quantum analogue of $\NP$, and is itself the quantum analogue of 3SAT.
An instance is a tuple $(H, \alpha, \beta)$ where $H = \frac{1}{m}\sum_{i} H_i$ is a normalized sum of Hermitian operators, each $H_i$ acting nontrivially on at most $k$ qubits. The goal is to decide whether the ground energy of the system, namely its minimum eigenvalue $\lambda_{\min}(H)$, is at most $\alpha$ or at least $\beta$, promised that one of these holds. The quantity $\beta - \alpha$ is called the \emph{promise gap}, and the problem is $\QMA$-complete for any inverse-polynomial gap $\beta - \alpha \ge 1/\poly(n)$ \cite{KSV02}. The locality parameter $k$ plays an important role, and in this work, we focus on arbitrary constant $k$, though our results extend naturally to any locality.

% The quantum PCP conjecture \cite{aharonov2013guest} asks whether the Local Hamiltonian problem (the quantum analogue of constraint satisfaction) remains hard at constant gap; while its multiplayer-games variant has seen dramatic progress (notably $\MIP^* = \RE$ \cite{JNVWY21}), the Hamiltonian version studied here remains wide open.

Due to the intractability of the Local Hamiltonian problem in full generality, a vast line of research has studied restricted variants, seeking better upper bounds (e.g., geometrically local instances \cite{oliveira2005complexity,AGIK09-quantum-on-line}, commuting Hamiltonians \cite{bravyi2003commutative,hastings2012trivial}, translationally invariant systems \cite{bausch2017complexity,cubitt2016complexity}, and much more; see \cite{gharibian2014quantum} for a survey).

In this work, we consider two standard restrictions. 
The first is \emph{frustration-free} instances, where the ground state (the minimal eigenvector) simultaneously satisfies every local term $H_i$ (equivalently, $\alpha = 0$). This is a well-studied condition known to preserve hardness in many settings \cite{BBT06,cubitt2016complexity,AGIK09-quantum-on-line}.

The second, and more central to this work, is the restriction to \emph{Stoquastic} Hamiltonians \cite{BBT06,BT10}, a natural ``sign-free'' variant of the Local Hamiltonian problem. 
Since each Hermitian matrix $H_i$ can have complex entries, it permits destructive interference between amplitudes, a key source of quantum hardness and the main obstacle to classical simulation. 
Stoquastic Hamiltonians eliminate this by restricting each local term $H_i$ to be real-valued, with non-positive off-diagonal entries. In this case, one can show there exists a \emph{ground state} with real, non-negative amplitudes \cite{BBT06}, rendering it tantamount to a \emph{classical} probability distribution amenable to classical Markov chain methods \cite{BBT06, AG19,bravyi2017complexity,piddock2015complexity,bravyi2006complexity,klassen2019two,aharonov2025stoqma,REG26-geometric-completeness-stoqlh}.

%
% \footnote{The formal definition is different, but reduces to the written statement. Formally, each $H_i$ is required to be real-valued with non-positive off-diagonal entries. However, shifting attention to the projectors $P_i$ onto the ground spaces of $H_i$, one can show that each $P_i$ is entrywise real and non-negative \cite{BBT06}, and that this formulation loses no generality. It is therefore the Hamiltonian $\sum_i P_i$ in this projector formulation that we work with throughout.}
% 
% This confines the ground state to have real, non-negative amplitudes.

\cite{BBT06} define the complexity class $\stoqMA(\alpha,\beta)$, a variant of $\QMA$ in which the verifier is fully classical except for a final Hadamard-basis measurement, and show that the Stoquastic Local Hamiltonian problem is complete for this class. It follows from the definition of the class that
\[
\MA \subseteq \stoqMA(\alpha, \beta) \subseteq \QMA.
\]
% The class $\stoqMA$ also admits a natural computational interpretation as an intermediate model between $\MA$ and $\QMA$.\footnote{Concretely, \cite{BBT06} showed that it is the class of problems decidable by an ``$\MA$-style'' circuit---with a nondeterministic witness, random coins, and only Toffoli gates---but with the final measurement performed in the $\{\ket{+},\ket{-}\}$ basis, which clarifies its position between $\MA$ and $\QMA$.}
% 
Intuitively, it seems that $\stoqMA$ lies closer to $\MA$, since the $\stoqMA$ verifier is ``barely quantum''. Indeed it is conjectured that $\stoqMA = \MA$ \cite{BBT06,aharonov2025stoqma}, though this remains open.\footnote{It is known that $\stoqMA \subseteq \mathbf{AM}$ \cite{bravyi2006complexity}.}

% Following \cite{BBT06,AG19}, we denote by $\ProjUSLH_q(k, d, \alpha, \beta)$ the corresponding Projective Uniform Stoquastic local Hamiltonian gap problem over $q$-dimensional qudits, which is also complete for $\stoqMA(\alpha, \beta)$.\footnote{The Stoquastic LH problem was introduced by \cite{BBT06}. For frustration-free instances, \citet{AG19} further introduced the \emph{uniform} variant, in which each local projector term decomposes into orthogonal \emph{subset states} (see \cref{sec:preliminaries}). This restriction simplifies the analysis while preserving $\MA$-completeness.} The parameter $d$ bounds the number of terms involving any given qudit; for clarity, the reader can assume it to be a constant $d=O\ps{1}$ throughout the paper.\footnote{The assumption $d=O(1)$ is justified by Kitaev's circuit-to-Hamiltonian reduction \cite{KSV02}; see also \cite{AG19}.}

\citet{BBT06} showed that the \emph{frustration-free projective} Stoquastic Local Hamiltonian problem (in which every local term is a projector and $\alpha = 0$) is $\MA$-complete at inverse-polynomial soundness $\beta = 1/\poly(n)$. The result was later extended to the \emph{uniform projective} case \cite{AG19}.\footnote{A projector is called uniform if it can be decomposed into sum of orthogonal uniform subset state projectors. See \cref{sec:preliminaries} for a formal definition.}
Most notably, these yielded the first $\MA$-complete problem(s), offering a surprising quantum lens through which to study $\MA$.
We focus on this setting and denote the problem over $q$-dimensional qudits by $\ProjUSLH_q(k,d,\alpha,\beta)$, where $\alpha$ and $\beta$ are the YES- and NO-case energy thresholds.
\begin{theorem}[{\cite{BBT06,AG19}}]\label{thm:bbt main}
    There exist constant integers $q,k,d$ such that $\ProjUSLH_q(k, d, 0, \tfrac{1}{\poly(n)})$ is $\MA$-complete.
\end{theorem}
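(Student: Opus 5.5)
The proof has two directions: membership in $\MA$ via a random walk verifier, and $\MA$-hardness via a circuit-to-Hamiltonian construction.

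\textbf{Membership.} This follows the random walk approach of \cite{BBT06,AG19}.

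First, we use uniformity. Each local projector $P_i$ decomposes into orthogonal uniform subset-state projectors $\ketbraa{S}$ with $S \subseteq [q]^k$. This induces a graph on strings $x \in [q]^n$: two strings are adjacent if some term $i$ has a subset $S$ containing both restrictions $x|_{\mathrm{supp}(i)}$ and $y|_{\mathrm{supp}(i)}$, with $x,y$ agreeing outside $\mathrm{supp}(i)$. A string is \emph{bad} if some term has its restriction outside the support of every subset state in that term's decomposition.

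The $\MA$ verifier receives an initial string $x_0$. It performs a random walk of polynomial length, at each step picking a random term $i$ and resampling $x|_{\mathrm{supp}(i)}$ uniformly within the subset containing it. It rejects if it ever hits a bad string.

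For completeness, frustration-freeness implies that a nonnegative ground state $\psi$ exists. Its support is closed under the walk, and $\psi$ is uniform on connected components. So a witness string in $\supp(\psi)$ never reaches a bad string, and the verifier accepts with probability $1$.

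For soundness, suppose $\lambda_{\min}(H) \ge \beta = 1/\poly(n)$. Take the connected component $C$ of $x_0$ and the uniform state on $C$. If $C$ contains few bad strings relative to its size, this state has energy well below $\beta$, a contradiction. So bad strings carry at least roughly a $1/\poly$ fraction of $C$'s weight. The walk is symmetric with uniform stationary distribution on $C$, and its spectral gap relates to $H$'s gap through the identity $H = I - (\text{walk operator})$ restricted appropriately, up to normalization. A polynomial-length walk therefore hits a bad string with probability $1/\poly$, and repeating amplifies this.

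I expect soundness to be the main obstacle. One must argue via a conductance or eigenvalue bound that either the energy of the uniform state on $C$ is small, or the walk escapes to bad strings quickly. The cleanest route is to consider the walk killed at bad strings. Its sub-stochastic transition matrix equals $I - H'$ for a Hamiltonian $H'$ whose minimum eigenvalue is at least $\lambda_{\min}(H)$ up to factors of $m$ and $d$. Then survival probability after $T$ steps is at most $\sqrt{|C|}(1-\beta/\poly)^T$, which is exponentially small for polynomial $T$ since $|C| \le q^n$.

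\textbf{Hardness.} Take an $\MA$ verifier, convert it to a coherent reversible classical circuit of Toffoli gates acting on witness, ancilla and coin qubits, with coins initialized in $\ket{+}$. Apply Kitaev's clock construction, following \cite{BBT06}.

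The propagation terms for permutation gates are stoquastic projectors onto uniform two-element subset states $(\ket{t,x}+\ket{t+1,g(x)})/\sqrt2$. Input constraints fix ancillas to $0$ and coins to $\ket{+}$, which are uniform. The output term penalizes rejection.

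Perfect completeness is needed for frustration-freeness. It is available because $\MA$ admits perfect completeness, so the history state has zero energy. For soundness, the standard geometric lemma gives energy $1/\poly$.

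Finally, reduce locality and degree to constants $k,d$ with unary/qudit clocks, checking that each term stays a sum of orthogonal uniform subset-state projectors, as in \cite{AG19}.
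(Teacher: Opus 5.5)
Your proposal is correct and follows the standard route of the works the paper cites for this theorem; the paper gives no proof of its own beyond the random-walk sketch in its overview. In your sketch, the killed-walk bound $\langle \mathbf{1}_{\mathrm{good}}|(\Pi_{\mathrm{good}} A\,\Pi_{\mathrm{good}})^T|x_0\rangle \le q^{n/2}(1-\beta)^T$ with $A=\frac{1}{m}\sum_i P_i$ is what carries soundness, so the preliminary count of BAD strings in $C$ is neither needed nor sufficient on its own. As in the paper, the constant-degree claim is only deferred to \cite{BBT06,AG19}: making it explicit means copying wires into fresh ancillas so each qubit meets $O(1)$ gates, and staggering the input checks over time so no clock qubit lies in polynomially many terms, and all resulting terms remain sums of orthogonal uniform subset-state projectors.
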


Remarkably, \citet{AG19} showed that at constant soundness $\beta >0$, the problem can be decided in non-deterministic polynomial time, namely $\ProjUSLH_q(k, d, 0, \beta = \Omega(1)) \in \NP$.
Remarkably, this implies that a stoquastic PCP theorem, i.e., gap amplification for $\ProjUSLH(k,0, \cdot)$, would collapse $\MA = \NP$, a major open problem in derandomization \cite{NW94,IW97,IKW02,KVM02}. 

\subsection{Our results}

We present an improved algorithm for $\ProjUSLH$ that pushes the tractability threshold to a sub-constant soundness while remaining within $\NP$, yielding strict improvement over \cite{AG19}.
Formally, although our results extend naturally to arbitrary parameter choices, we focus on constant parameters $k, q, d = O\ps{1}$.

Our main contribution is an alternative analysis framework that yields an improved $\NP$ algorithm for subconstant soundness $\beta=o(1)$.
Our framework uses the probabilistic method (though the algorithm itself is deterministic), which yields a stronger, simpler analysis.

\begin{theorem}\label{thm:main NP general X}
    For any $k,d,q \in \mathbb{N}$ there exists $\eta$ such that $\ProjUSLH_{q}(k, d, 0, \eta\cdot  \tfrac{1}{\log \log n}) \in \NP$.
\end{theorem}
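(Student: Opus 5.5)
We follow the random-walk framework of \cite{BBT06,AG19}. For a uniform projective stoquastic instance, the frustration-free ground space is spanned by nonnegative states. We define a graph on strings $x\in[q]^n$: $x\sim y$ whenever some term $H_i$ has a subset state whose support contains both $x|_{S_i}$ and $y|_{S_i}$, and $x,y$ agree outside $S_i$. A string is \emph{bad} if it violates some term, meaning that $x|_{S_i}$ lies in the support of the energy-penalized part of $H_i$.

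\textbf{YES case.} The uniform superposition over the connected component of any string in the support of a ground state contains no bad string. The $\NP$ witness is therefore just a string $x$ in that support.

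\textbf{NO case.} We want to show that from every $x$, some bad string is reachable within a small number of steps. The verifier will then deterministically explore all strings within distance $L$ of $x$ and reject if any is bad. Since every string has at most $D=d\cdot q^{k}\cdot\mathrm{poly}(k)=O(1)$ neighbors, the ball of radius $L$ has size $D^{L}$. This is polynomial exactly when $L=O(\log n)$.

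The task is thus to prove: if $\lambda_{\min}(H)\ge\beta$ with $\beta=\eta/\log\log n$, then every $x$ has a bad string within distance $c\log n$. We argue by contradiction and the probabilistic method. Suppose the ball $B$ of radius $L$ around $x$ contains no bad string. Consider the lazy random walk on the graph started at $x$: at each step, pick a random term and resample its local assignment uniformly inside its subset state. We build a test state $\ket{\psi}\propto\sum_y w(y)\ket{y}$, taking the weights $w$ to be the walk's distribution averaged over $t\le T$ steps, restricted to $B$ with a radial cutoff. The energy $\bra{\psi}H\ket{\psi}$ splits into two parts:
\begin{itemize}
\item the non-uniformity of $w$ across edges, a Dirichlet form, which for the time-averaged distribution is $O(1/T)$ after normalization;
\item leakage at the boundary of $B$.
\end{itemize}
There is no violation energy, since all strings in $B$ are good. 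The goal is an energy below $\beta$, contradicting soundness.

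The main obstacle is bounding the boundary term. Naive bounds such as those in \cite{AG19} require mass at distance $L$ to be tiny, which forces $\beta$ to be constant. I would instead exploit locality. The walk started at $x$ changes, in $T$ steps, only qudits within a spreading light cone. Since each qudit participates in at most $d$ terms, the probability that the walk reaches distance $r$, i.e.\ that $r$ distinct local updates accumulate, is bounded by a branching-process estimate of the form $(eT/r)^{r}$, up to constants depending on $k,d,q$.

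Choosing $T\approx\log\log n/\eta$ (so that the Dirichlet energy $1/T\lesssim\beta/2$) and $L=c\log n$, the tail at distance $L$ is $(T/L)^{L}\le n^{-\Omega(1)}$. This is negligible, giving total energy below $\beta$ for small $\eta$.

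Verifying that the Dirichlet term really scales like $1/T$ uniformly, so that $\beta$ can be as small as $1/T$, is where I expect the delicate work. It requires care with how the normalization interacts with the cutoff, and it is the reason for the $\log\log n$ rather than $\log n$ threshold. Choosing $\eta$ small in terms of $k,d,q$ completes the argument.
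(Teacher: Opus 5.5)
\textbf{Verdict.} Your YES case is fine, but the proposal has two genuine gaps, and either one is fatal.

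\textbf{The verifier is not polynomial time.} The degree bound $d$ limits how many terms touch a given \emph{qudit}. It does not limit how many terms can move a given \emph{string}. Any of the $m=\Theta(n)$ terms can act on $x$, so $x$ has up to $mq^k$ neighbors. Hence the radius-$L$ ball has $n^{\Theta(L)}$ strings; for single-qubit terms it already contains $\binom{n}{\le L}$ strings. For $L=c\log n$ this is $n^{\Theta(\log n)}$. This ball-volume bound is exactly the bottleneck of \cite{AG19} that forces constant soundness there.

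The paper never enumerates a ball. It enumerates only rooted overlapping sequences $(P,P_{i_1},\ldots,P_{i_\ell})$, in which each term overlaps an earlier one, and there are at most $m\,(kd\ell)^{\ell}$ of them. For each, it checks the strings in $\supp(P_{i_1}\cdots P_{i_\ell}\ket{x})$ for being $P$-BAD. With $\ell\le\ell_\eps=2^{O(1/\eps)}$ this costs $\exp(\exp(O(kd\log q/\eps)))$. That double exponential is where the $\log\log n$ threshold comes from, and it is why $\eta$ must be taken large in terms of $k,d,q$, not small. Nothing in your sketch actually produces the $\log\log n$ dependence.

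\textbf{The NO-case argument fails at its core.} Since you take $T\approx\log\log n/\eta<L$, the walk never leaves $B$. So the boundary term is zero, and everything rests on the claim that the time-averaged walk state has energy $O(1/T)$. That claim is false. The $O(1/T)$ Cesàro bound holds for $\|(I-W)\bar\mu\|_1$. The energy, however, is the $\ell_2$ ratio $\langle\bar\mu,(I-W)\bar\mu\rangle/\|\bar\mu\|_2^2$, which stays constant whenever return probabilities decay geometrically.

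In fact no state localized near $x$ can have small energy. Take $q=2$, $k=1$ and $P_i=\ket{+}\bra{+}_i$ on every qubit; this instance has no BAD strings at all. Any $\psi$ supported within Hamming distance $L$ of $0^n$ satisfies $\langle\psi|\sum_i X_i|\psi\rangle\le 2\sqrt{nL}$, so its energy is at least $\tfrac12-\sqrt{L/n}$. So ``no BAD string within distance $L$'' can never be converted into a state of energy $<\beta$ when $L\ll n$.

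\textbf{The missing idea} is to keep the growth process global and localize only the \emph{dependency} structure. The paper does this as follows:
\begin{enumerate}
\item It applies a uniformly random sequence of $T=\lceil 16m\ln q/\eps\rceil$ terms to uniform subset states.
\item By the frustration-implies-expansion lemma of \cite{AG19}, $\Ex[\lvert S_T\rvert]$ would exceed $q^n$ unless a BAD string is hit.
\item It bounds each step's backward lightcone by $\Ex[D_t]\le(1+kd/m)^{T}\le q^{16kd/\eps}$.
\item It chops the steps with $D_t>\ell_\eps$. By Markov these occur with probability at most $\eps/2$, so the expected frustration stays at least $\eps/2$ and the expansion argument still forces a BAD string.
\item The lightcone of the chopped step that hits the BAD string is an overlapping sequence of length at most $\ell_\eps$ reaching it.
\end{enumerate}
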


Combining the above with \cref{thm:bbt main}, we conclude that a gap amplification reduction for $\ProjUSLH_q$ from inverse-poly up to $\beta=\Omega\ps{\frac{1}{{\log\log n}}}$ would suffice to derandomize Merlin-Arthur games.

\begin{corollary}
    There exists a constant $d \in \N$ such that, if there exist $k',d',q' \in \N$ such that there is a reduction from $\ProjUSLH_{q=2}(k=6, d, 0, \tfrac{1}{\poly(n)})$ to $\ProjUSLH_{q'}(k', d', 0, \eta' \cdot \tfrac{1}{\log \log n})$ for large enough constant $\eta' = \eta'(k',d',q')$, then $\MA = \NP$.
\end{corollary}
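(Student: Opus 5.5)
The plan is to compose reductions, using \cref{thm:bbt main} for hardness and \cref{thm:main NP general X} for the upper bound. First, fix the constants. By \cref{thm:bbt main}, in the form established by \cite{BBT06,AG19} with qubits and $6$-local terms, there is a constant $d$ such that $\ProjUSLH_{2}(6,d,0,\tfrac{1}{\poly(n)})$ is $\MA$-hard under polynomial-time Karp reductions of promise problems. Take this as the $d$ in the corollary. Now fix $k',d',q'$ as in the hypothesis. Let $\eta = \eta(k',d',q')$ be the constant supplied by \cref{thm:main NP general X}, and set $\eta'(k',d',q') := c\,\eta$ for a suitable absolute constant $c \geq 1$, chosen below.

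Next, I will check that the target problem lies in $\NP$. This is a monotonicity argument in the soundness threshold. A YES instance of $\ProjUSLH_{q'}(k',d',0,\eta'/\log\log n)$ is frustration-free, so it is also a YES instance of $\ProjUSLH_{q'}(k',d',0,\eta/\log\log n)$. A NO instance has ground energy at least $\eta'/\log\log n \geq \eta/\log\log n$, so it is also a NO instance of that problem. Hence the $\NP$ verifier from \cref{thm:main NP general X}, run unchanged, decides the target problem.

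The one subtlety is the parameter $n$. In the hypothesis, the threshold may be measured either in the size $n$ of the original instance or in the size $n' \leq \poly(n)$ of the reduced instance. Since $\log\log(n^{c_0}) = \log\log n + \log c_0$, the two thresholds differ by at most a constant factor for large $n$. Choosing $c$ large enough, depending only on the polynomial blow-up, absorbs this factor; finitely many small inputs can be hard-coded. I expect this bookkeeping to be the only place requiring care.

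Finally, compose. Take any promise problem $\Pi \in \MA$. Map $\Pi$ to $\ProjUSLH_2(6,d,0,\tfrac{1}{\poly(n)})$ by \cref{thm:bbt main}, and then to the target problem by the assumed reduction. Both maps are polynomial-time Karp reductions that preserve YES and NO instances, so their composition is one too. Promise-$\NP$ is closed under such reductions, so $\Pi \in \NP$ (in its promise version). Restricting to languages gives $\MA \subseteq \NP$. The inclusion $\NP \subseteq \MA$ is trivial, so $\MA = \NP$.
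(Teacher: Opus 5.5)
Your route is the one the paper intends. The paper gives only the one-line ``combine \cref{thm:main NP general X} with \cref{thm:bbt main}'': compose the $\MA$-hardness reduction with the assumed one, use monotonicity of the $\NO$ set in the soundness threshold, and use closure of promise-$\NP$ under Karp reductions. If the hypothesis measures the threshold in the reduced instance's own size $n'$, your argument is complete with $\eta'=\eta(k',d',q')$, and no bookkeeping is needed.

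The paragraph you single out as the only delicate step, however, argues in the wrong direction. Suppose the threshold is $\eta'/\log\log n$ with $n$ the \emph{source} size. Then a $\NO$ image has frustration at least $\eta'/\log\log n$. To invoke \cref{thm:main NP general X} as a black box on an $n'$-qudit instance, you need $\eta'/\log\log n\ge\eta/\log\log n'$, which is a \emph{lower} bound on $n'$. The bound $n'\le n^{c_0}$ is the harmless direction: if $n'\ge n$, then $c=1$ already works. The problematic case is a reduction that shrinks instances, e.g.\ $n'=(\log n)^{O(1)}$. There $\log\log n'=\log\log\log n+O(1)$, and no constant $c$ suffices.

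There is a second problem in the same paragraph. Letting $c$ depend on the reduction's blow-up makes $\eta'$ depend on the reduction, whereas the statement requires $\eta'=\eta'(k',d',q')$.

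Either of two easy fixes removes both problems.
\begin{enumerate}
\item Pad the reduced instance with idle qudits until $n'\ge n$. This leaves the terms, the locality, the degree and $\lambda_{\max}$ of $\frac1m\sum_i P_i$ unchanged.
\item Bypass the black box and run \cref{alg:verifier np loglogn} directly with $\eps=\eta'/\ln\ln n$. \Cref{lem:small-bad-sequence} only needs $\eps$ to lower-bound the actual frustration. The paper's running-time bound then reads $\poly(n)\cdot\exp(\exp(Ck'd'\ln q'/\eps))=\poly(n)\cdot\exp((\ln n)^{Ck'd'\ln q'/\eta'})$, where $C$ is the absolute constant hidden in that bound. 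This is polynomial as soon as $\eta'\ge Ck'd'\ln q'$, a threshold that depends only on $(k',d',q')$, exactly as the statement requires.
\end{enumerate}
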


Finally, we give evidence that the soundness bound in \cref{thm:main NP general X} is tight for our current algorithmic approach. At a high level, the verifier searches for a short sequence of local terms that leads from a proposed witness string to a contradiction. Its running time depends not only on the length of such sequences, but also on the number of candidates that appear promising according to the progress measure used in our analysis. In \cref{sec:example delta valid cones}, we construct explicit frustrated instances admitting exponentially many such candidates. Consequently, this progress measure alone cannot prune the search sufficiently to extend \cref{thm:main NP general X} to $\beta=o\ps{1/\log\log n}$; such an extension would require a new approach.

\subsection{Technical Overview}\label{sec:techincal overview}
A critical component in the analysis of stoquastic Hamiltonians is the graph-based framework of \cite{BBT06}. We use its simplified undirected form that was presented by \cite{AG19}.\footnote{The undirected and directed variants correspond to the uniform and non-uniform variants of Stoquastic Local Hamiltonian, respectively. Both variants are $\stoqMA$-complete and therefore fully characterize the problem.}

For this overview, an instance $\cP = \set{P_i}_{i \in [m]}$ of $m$ many $k$-local Projector Uniform Stoquastic terms over $n$ qubits consists of projectors $P \in \cP$ that are (a) entrywise real and non-negative, and (b) $k$-local. 
To simplify the exposition, we ignore the ``uniform'' restriction here, though it remains essential for the proof (see the full definition at \cref{def:projUSLH}).

We embed $\cP$ into an exponentially large graph $G = (\powern, E)$, with vertices corresponding to computational basis elements $x\in \powern$ and edges defined by
\[
(x,y) \in E \qlrq \bra x \frac{1}{m} \sum_{i\in[m]} P_i \ket y > 0.
\]
\citet{BBT06} introduced the notion of BAD strings $x \in \powern$, that are defined as those satisfying $\bra x P_i \ket x = 0$ for some term $P_i$.
Subsequently, they showed that these strings sharply characterize the perfect-completeness gap problem, where YES instances are frustration-free while NO instances are $\eps$-frustrated on average:
\begin{enumerate}
    \item[(YES)] if $\ket\theta$ is a ground state that is supported on the set $\supp(\ket\theta) = \sett{x}{\braket{\theta}{x} \neq 0}$, then the connected component of $\supp(\ket\theta)$ in $G$ consists entirely of GOOD strings;
    
    \item[(NO)] 
    every connected component of $G$ contains a BAD string, and a random walk from any vertex hits a BAD string with high probability.
\end{enumerate}
This yields a natural $\MA$ verifier: it expects a witness $x \in \supp(\ket\theta)$, walks randomly on $G$, and rejects upon encountering a BAD string.

\citet{AG19} subsequently derandomized this verifier. Their key insight is that, in NO instances which are $\eps$-frustrated for constant $\eps >0$, one can deterministically locate a BAD string because \emph{frustration implies expansion}. 
To this end, they relied on two main lemmas in the analysis of such NO instances:

First, they showed that every state $\ket\varphi$ is ``$\eps/2$-frustrated'' by at least an $\eps/2$ fraction of the terms. This follows from a simple averaging argument: there exist many terms $P_i \in \cP$ that satisfy
\begin{align}\label{eq:frustrationa}
    \norm{P_i \ket{\varphi}}_2^2 \leq 1-\eps/2.
\end{align}
Second, a short algebraic calculation shows that any such $P_i$ \emph{expands} the support of $\ket{\varphi}$, when $\ket{\varphi}$ is a ``uniform subset state''.\footnote{We say $\ket{\varphi}$ is a subset state if $\ket{\varphi} = \tfrac{1}{\sqrt{|S|}} \sum_{x \in S} \ket{x}$ for some $S \subseteq \powern$.}
At a high level, because the terms are stoquastic projectors, i.e., entrywise non-negative, frustration implies growth of the support size of entry-wise non-negative states. Since \cref{eq:frustrationa} gives step frustration at least $\eps/2$, the frustration--expansion lemma yields
\begin{align}\label{eq:expansion}
    \abs{\supp(P_i \ket{\varphi})} \ge (1 + \eps/4)\abs{\supp(\ket{\varphi})}.
\end{align}
Then \cite{AG19} organizes the expanding projectors into $\ell=O(1/\eps^2)$ layers, each of which contains a sequence of terms that act on non-overlapping qubits. Then, starting from the state $\ket{\varphi_0} = \ket{x}$ and applying the terms sequentially, as long as no BAD string is encountered, each term (or layer) expands the support; since the support has size at most $2^n$, a BAD string must eventually be reached (which always exists for a frustrated instance).

Tracing backward from a term witnessing this BAD string, i.e. $P$, gives a lightcone (denoted $C_P$), whose size is bounded by $\abs{C_P}  \approx k^{\ell}$. Thus the BAD string lies within radius $\abs{C_P}$ of the witness $x$ in the graph $G$, and their verifier finds it by exploring the entire surrounding ball in the graph $G$.
Their final argument is therefore to bound the number of vertices in this ball:
\begin{align}\label{eq:ball volume}
    \abs{\mathrm{Ball}_G(x,\, \abs{C_P})} \le n^{\abs{C_P}} \approx n^{k^\ell}.
\end{align}
The issue is that plugging $\ell \approx 1/\eps^2$ and enforcing polynomial runtime requires \cref{eq:ball volume} to be polynomial, forcing $\eps = \Omega(1)$ to be constant.

\paragraph{Improvement to $\eps \ge \tfrac{1}{\sqrt{\log \log n}}$.}
As a first step, we observe that enumerating the entire ball around $x$ is wasteful. Instead of enumerating the entire ball, we can restrict the walk to terms that form a \emph{lightcone}, of which there are far fewer:
\begin{align}\label{eq:cones trivial bound}
    \abs{\Cones_{P, \ell}} \le 2^{{kd}^{\ell}}.
\end{align}
This bound uses two separate properties of the layered lightcones from \cite{AG19}: terms within each layer act on disjoint sets of qubits, while the lightcone construction retains from each layer only terms that overlap a retained term in the next layer closer to the root.
Since all terms are $k$-local, the number of candidates overlapping with any fixed term is bounded by $kd$, and iterating this count across $\ell$ layers yields the bound above.

Enumerating $\abs{\Cones_{P,\ell}}$ in polynomial time would enforce \cref{eq:cones trivial bound} to be at most $\poly(n)$, and plugging $\ell=O(1/\eps^2)$ into \cref{eq:cones trivial bound} yields $\eps \ge \tfrac{1}{\pss{\log\log n}^{1/2}}$ and therefore an immediate improvement over the earlier $\NP$ algorithm for soundness $\eps = \Omega(1)$.

\paragraph{Improvement to $\eps \ge \tfrac{1}{\log \log n}$.} 
We next present how to improve the algorithm to support quadratically smaller $\eps$.
% \red{The square-root loss comes from the layered analysis of \cite{AG19}, which requires $\ell \approx 1/\eps^2$ layers. Our verifier is likewise deterministic; only the analysis is randomized, replacing these layers by a sufficiently long random sequence of terms.} We then show that, although the sequence itself may be long, only a short overlapping subsequence is relevant for reaching a BAD string.
%
The reason for the square root loss comes from the number of layers in the argument above: the deterministic expansion argument of \cite{AG19} requires $\ell \approx 1/\eps^2$ layers. We avoid this loss by replacing the carefully chosen (but otherwise ``worst-case'') layers with a sufficiently long random sequence of terms.
(Note that this step only appears in the analysis; the $\NP$ algorithm remains deterministic.) We then show that---similarly to \cite{AG19}---although the sequence itself may be long, only a short overlapping subsequence is relevant for reaching a BAD string.

In the NO case, \emph{every} state $\ket{\varphi}$ has frustration at least $\eps$. This means that, writing $\gamma_i = 1 - \norm{P_i \ket{\varphi}}_2^2$, we have
\[
\Expect{i\inr [m]}{\gamma_i}
= \frac{1}{m}\sum_{i=1}^m \gamma_i
= 1 - \bra{\varphi}\frac{1}{m}\sum_{i=1}^m P_i\ket{\varphi}
\ge \eps.
\]
Thus, choosing $P_i$ uniformly at random gives frustration at least $\eps$ in expectation. For a subset state whose support contains no BAD strings, the frustration/expansion lemma (\cref{eq:expansion}) therefore guarantees expansion in expectation as well.
This gives rise to a very natural approach: choose independently and uniformly at random $T$ terms, for some integer $T$ to be determined:
\[
P_{i_1},\ldots,P_{i_T} \inr \cP.
\]
Fix an arbitrary witness $x\in\powern$. Starting from $S_0=\{x\}$, we apply the terms sequentially. We denote the $t$-th step support and frustration as:
\begin{align*}
    S_t &\defeqq \supp \ps{ P_{i_t} \ket{S_{t-1}}}, \\
    \gamma_t &\coloneqq 1- \norm{ P_{i_t}\ket {S_{t-1}} }_2^2.
\end{align*}
Since the instance is $\eps$-frustrated and $P_{i_t}$ is chosen uniformly, we have that by definition, the expected frustration at any step is $\Expect{P_{i_t}}{\gamma_t} \geq\eps.$
Moreover, as long as we have not encountered a BAD string, the frustration/expansion lemma (\cref{eq:expansion}) implies that the sets $S_1,\ldots, S_T$ are gradually expanding:
\[
    \abs{S_t}
    \geq
    \left(1+\frac{\gamma_t}{2}\right)|S_{t-1}|.
\]
Thus, if no BAD string were reached, one can show that the expected support after $T$ steps would satisfy
\begin{align}\label{eq:final expected frustration}
\Expect{}{ \abs{ S_T }}
    \geq
    \exp \ps{ \frac{1}{4}
        \sum_{t=1}^T \Expect{}{ \gamma_t} }
    \geq
    \exp\ps{ \frac{\eps T}{4} }.
\end{align}
Therefore, choosing $T = 16 n/\eps$, this lower bound is at least $2^{2n}$, which exceeds the $2^n$ strings in the entire space. Hence some sequence of terms must reach a BAD string.

However, enumerating all sequences of length $T$ would be far too expensive. The key point is that most terms in such a sequence are irrelevant for producing the final BAD string. To see this, for every step $t$, consider the (reversed order) sequence
\[
    P_{i_t},P_{i_{t-1}},\ldots,P_{i_1}.
\]
Now beginning from the first term $P_{i_t}$, iteratively retain only terms that overlap some term already retained. Denote the resulting sequence as
\[
    P_{i_t'}'=  P_{i_t} , P_{i_{t-1}'}' , \ldots, P'_{i_1'},
\]
and denote the length of this subsequence as $D_t$.

This is precisely the backward lightcone of the root term within the random sequence. In contrast to the structured lightcone that emerges from the analysis of \cite{AG19}, our lightcone's only structure requirement is that each term overlaps a previously chosen term (but not necessarily in layers).

It is not too hard to show that the expected length of the retained overlapping sequence is at most exponentially large in $1/\eps$: 
\[
    \Expect{}{ D_t} \approx 2^{O(1/\eps)}.
\]
Therefore, setting $\ell_\eps = \tfrac{2}{\eps} \Expect{}{D_t}$, it follows by Markov's inequality that with high probability the retained overlapping sequence has length at most $\ell_\eps$:
\[
    \Prob{}{ D_t>\ell_\eps } \leq\frac{\eps}{2}.
\]
We then show that even after \emph{chopping} the random sequence whenever $D_t>\ell_\eps$, namely removing some terms $P_{i_t}$, the expected frustration is still relatively high, i.e. $\ge \eps/2$. The preceding expansion argument (\cref{eq:final expected frustration}) therefore continues to work, with only a constant-factor loss. In particular, the chopped sequence must still reach a BAD string. On the other hand, every term that survives the chopping has a bounded (backward) lightcone of size at most 
\[
\ell_\eps \approx \exp(1/\eps),
\]
in contrast to our previous argument bounding it by $\approx \exp(1/\eps^2)$. 
This allows us to choose $\eps > 1/\log \log n$ while still enumerating all overlapping sequences of length $\ell_\eps$ in polynomial time.

\paragraph{Tightness of the algorithm.}
One may hope to improve the algorithm by \emph{pruning} overlapping sequences that do not make progress. In particular, we could enumerate only sequences in which every applied term has noticeable frustration. We show that this does not help:
our last contribution is demonstrating an explicit frustrated instance, which has exponentially many overlapping sequences, that all appear equally promising according to this criterion (namely, all of them are sequentially frustrated).

Let us describe the example at a high level. The instance is arranged on a complete $k$-ary tree over the nodes $V = [n]$, where the qudits live over the alphabet $\Sigma = \set{0, \dots, q-1}$ for $q \geq 3$. We associate a single term $P^{(v)}$ with each node $v\in V$, that acts as follows:
\begin{itemize}
    \item it acts only on the qudit $v$ and its children $v[1],\ldots, v[k]$.
    \item $P^{(v)}$ projects the $v$-th qudit to the subset state $|A\rangle$, for some $A \subseteq \Sigma$.
    
    \item $P^{(v)}$ projects each child register into the fully uniform state $\ket{\Sigma}$.
\end{itemize}
By choosing $|A|\approx(1-\delta)q$, one can show that these two requirements are slightly incompatible, making the instance $\Omega(\delta/k)$-frustrated, and therefore a NO instance. 

Now start from the witness $x=0^n$. For any rooted subtree, list its terms top down from the distinguished root, as in the rooted overlapping-sequence convention. The corresponding operator product applies them in reverse, bottom-up order. Immediately before applying $P^{(v)}$, register $v$ is therefore still in state $|0\rangle$, and hence
\[
\bigl\|P^{(v)}|\psi\rangle\bigr\|_2^2 
\leq1-\delta.
\]
Thus, every term frustrates the state when it is applied, and every such rooted subtree gives a sequentially $\delta$-frustrated overlapping sequence.

The problem is that there are many possible subtrees. At every reached node, we may independently choose which of its $k$ children to continue exploring. Among sequences containing at most $L$ applied terms, the construction gives at least $2^{\Omega(L)}$ different candidates. Taking $L=\ell_\eps \approx \exp(1/\eps)$ therefore yields
\[
2^{\Omega(L)} = \exp\ps{\exp\ps{\ps{1/\eps}}}.
\]
Although one of them is guaranteed to reach a BAD string, local frustration gives no way to distinguish it from the others. Hence a pruning strategy based on local frustration cannot improve the $\eps = \Theta(1/\log\log n)$ threshold of our algorithm.

\subsection{Discussions}
% \paragraph{Derandomizing the Analysis.}
% Our random selection of indices in \cref{sec:chop} is helpful to gurantuee that on average, the frustrated at each step $\gamma_t$ is at least $\ge \eps$, which is eventually necessary to gurantuee gradual expansion of the set $S_T$. If it an interesting research direction to try select those indices $i_1, \ldots i_T \sim \cI$ from some different distrubtion that have smaller support size compared to the unifmr $m^T$. Such a distribution might assist ...

\paragraph{Geometric constraints.}
Our techniques combine naturally with the geometric constraints of a $D$-dimensional lattice, yielding an algorithm for exponentially smaller soundness parameters, up to $\beta = \tfrac{1}{(\log n)^{O\ps{1/D}}}$. 

However, the ground energy of any (general) Local Hamiltonian problem on a $D$-dimensional lattice can be approximated deterministically in polynomial time (namely, in $\P$) to additive accuracy $\tfrac{1}{\poly (\log n)}$. Indeed, as noted by Hastings and Terhal (see \cite{AGIK09-quantum-on-line}), one can decompose the Hamiltonian into blocks over $O(\log n)$ qudits, find the ground-state energy within each block in exponential time $2^{O(\log n)} = \poly (n)$, and then deduce an $\approx (\log n)^{-1/D}$ additive approximation for the original instance. See \cite{aharonov2013guest} for additional details.

It is a curious coincidence that the deterministic block-decomposition technique and the frustration/expansion framework of \cite{AG19} both reach $\tfrac{1}{\poly(\log n)}$ approximation scales.

\paragraph{On the importance of uniformity-preserving reductions.}
Our results build on \citet{AG19} and are therefore restricted to stoquastic Local Hamiltonian instances satisfying the \emph{uniformity} condition. As observed there, extending the frustration–expansion framework to non-uniform weights is challenging because frustration may depend on global properties of the Hamiltonian. Our improvements thus further motivate understanding the distinction between the uniform and non-uniform variants.

\section{Preliminaries}\label{sec:preliminaries}
\subsection{Definitions}\label{sec:definitions}
We follow definitions and notations from \cite{AG19}. 

Let $\Sigma = [q]$ be an alphabet for $q\in \N$.
For any subset $S \subseteq \Sigma^n$, the Uniform Subset State $\ket S$ is defined as
\[
\ket S \defeq \frac{1}{\sqrt{\abs S}} \sum_{x\in S} \ket x .
\]
The support of a quantum state $\ket \varphi$ is defined as
\[
\supp(\ket \varphi) \defeq \sett{x}{\braket{x}{\varphi} \neq 0}.
\]
The Uniform Support Subset State of a quantum state $\ket \varphi$, denoted as $\ket{\wh \varphi}$, is defined as the uniform subset state over the support of the state, namely
\[
\ket{\wh \varphi} \defeq \ket{\supp(\ket \varphi)}.
\]
\begin{definition}\label{def:k-local uni stoc proj}
    A linear operator $P \in \C^{q^n \times q^n}$ is \emph{$k$-local Uniform Stoquastic Projector} if it satisfies the following:
    \begin{enumerate}
        \item Stoquastic Projector: $P$ is entrywise real and non-negative, namely $P \in (\R^{\ge 0})^{q^n \times q^n}$.
        
        \item $k$-local: $P$ acts non-trivially on at most $k$ qubits, namely $P = P'_{S} \otimes I_{[n] \setminus S}$ for some $\abs S \le k$.

        \item Uniform: the non-trivial part can be written as $P' = \sum_{j\in[r]} \ketbra{T_j}$, where each $\ket{T_j}$ is a uniform subset state, and the subset states $\ket{T_j}$ are orthonormal $\braket{T_i}{T_j} = \delta_{ij}$.

        % \item Uniform: $P = \sum_{j\in[q]} \ketbraa{T_j}$, where each $\ket{T_j}$ is a uniform subset state, and $\braket{T_j}{T_k} = 0$ for $j \neq k$.
    \end{enumerate}
\end{definition}
For example, if $P$ is a Uniform Stoquastic Projector acting non-trivially on exactly the first $k$ qubits, there exist an integer $r \le q^k$ and sets $T_{j} \subseteq [q]^{k}$ such that
\[
P = \sum_{j\in [r]} \sum_{z\in [q]^{n-k}} \ps{ \ketbraa{T_{j}} }_{[k]} \otimes ( \ketbraa{z} )_{[k+1,n]}.
\] 
The support of a $k$-local linear operator $P \in \C^{q^n\times q^n}$, denoted by $\supp(P)$, is the set of qudits $S \subseteq [n]$ on which it acts non-trivially.

\begin{definition}\label{def:projUSLH instance}
    We define $\ProjUSLHinstance$ as the set of all collections $\mathcal{P} = \set{P_i}_{i \in [m]}$ such that each $P_i$ is a $k$-local Uniform Stoquastic Projector over $n$ qudits with alphabet $\Sigma = [q]$, and each qudit participates in at most $d = O(1)$ terms.
\end{definition}

\begin{definition}\label{def:projUSLH}
The \emph{Projection Uniform Stoquastic Local Hamiltonian} promise problem is defined as
\[
\ProjUSLH_q(k, d, \alpha, \beta) \defeq \YES_{q, k, d, \alpha} \cup \NO_{q,k, d, \beta}.
\]
We omit the subscripts and abbreviate $\YES, \NO$ when the parameters are clear from context.

An instance $\cP \in \ProjUSLHinstance$ belongs to one of the sets if:
\begin{enumerate}
    \item $\cP \in \YES$: there exists a ``ground state'' $\ket{\psi}$ such that $\bra{\psi} \tfrac{1}{m} \sum_{i \in [m]} P_i \ket{\psi} \geq 1 - \alpha$;
    
    \item $\cP \in \NO$: for all states $\ket{\psi}$ it holds that $\bra{\psi} \tfrac{1}{m} \sum_{i \in [m]} P_i \ket{\psi} \leq 1 - \beta$.
\end{enumerate}
Any instance not belonging to either $\YES$ or $\NO$ is outside the promise.
\end{definition}

Our results work with an arbitrary alphabet size $q$, but for simplicity one should think of a constant-size alphabet, e.g., binary $q=2$. We omit $q$ from the notation when it is clear from context.

\subsection{The Associated Graph and Its Analysis}
Following \citet{BBT06,AG19}, we associate each instance of $\ProjUSLH$ with a graph. Given $\mathcal{P} = \set{P_i}_{i \in [m]} \in \ProjUSLHinstance$, define the undirected graph $G = (\Sigma^n, E)$ as follows:
\[
(x,y) \in E
\qlrq
\exists i \in [m] \; \text{ such that } \; \bra x P_i \ket y > 0.
\]

\begin{definition}[\cite{BBT06}]
    A string $x \in \Sigma^n$ is called $P_i$-\emph{BAD} if $\bra xP_i \ket x = 0$.
    Similarly, $x$ is \emph{BAD} if it is $P_i$-BAD for some $P_i \in \mathcal P$. A string $x \in \Sigma^n$ is \emph{GOOD} if it is not BAD.
\end{definition}
From here on we focus on analyzing instances $\cP$ of the gap problem $\ProjUSLH_q(k, d,0, \eps)$, namely those that admit either perfect completeness or non-trivial soundness $\eps > 0$. We introduce the high-level analysis of such graphs.

% \subsubsection{Yes case}
% In this case, we rely on earlier work:

In the YES case, we rely on earlier work:

\begin{lemma}[{\cite[Appendix A.2.1]{BBT06}}]\label{thm:BBT-yes-case connected componenet only GOOD}
    Consider a $\YES$ instance $\cP$ of $\ProjUSLH_q(k,d, 0, \eps)$, for arbitrary $\eps > 0$.
    
    Let $\ket \theta$ be the ground state of $\cP$. Then the connected component of every $w \in \supp(\ket \theta)$ in the graph of $\cP$ contains no BAD strings.
\end{lemma}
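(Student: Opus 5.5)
The plan has three steps: show that each $P_i$ fixes the ground state, use non-negativity to force every $P_i$ to preserve the support, and then propagate along edges of $G$.

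\textbf{Step 1: each $P_i$ fixes $\ket\theta$.} In the $\YES$ case with $\alpha=0$ we have $\bra\theta \frac1m\sum_i P_i\ket\theta \ge 1$. Each $P_i$ is a projector, so $\bra\theta P_i\ket\theta\le 1$ for every $i$. Hence equality holds term by term, which gives $\norm{P_i\ket\theta}_2^2=1$ and therefore $P_i\ket\theta=\ket\theta$.

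\textbf{Step 2: we may take $\ket\theta$ entrywise non-negative.} The natural candidate is $\ket{\theta'}=\sum_x |\braket x\theta|\ket x$. Since $P_i$ is entrywise non-negative, for every $x$
\[
\bra x P_i\ket{\theta'} \;\ge\; |\bra x P_i\ket\theta| \;=\; |\braket x\theta| \;=\; \braket{x}{\theta'} .
\]
So $P_i\ket{\theta'}\ge\ket{\theta'}$ entrywise. Both vectors are non-negative and $\norm{P_i\ket{\theta'}}\le\norm{\ket{\theta'}}$, which forces equality. The state $\ket{\theta'}$ has the same support as $\ket\theta$, so we may work with $\ket{\theta'}$ instead.

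\textbf{Step 3: support closure.} Let $S=\supp(\ket{\theta'})$. For $x\in S$ and $y\notin S$ we have
\[
0=\braket y{\theta'}=\bra yP_i\ket{\theta'}=\sum_{z\in S}\bra y P_i\ket z\,\braket z{\theta'} .
\]
Every summand is non-negative, so each one vanishes. In particular $\bra yP_i\ket x=0$ for every $i$, which means there is no edge from $S$ to its complement. Thus $S$ is a union of connected components of $G$.

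\textbf{Step 4: every string in $S$ is GOOD.} Take $x\in S$ and any $i$. Then
\[
0<\braket{x}{\theta'}=\sum_{z}\bra xP_i\ket z\,\braket{z}{\theta'} .
\]
If $x$ were $P_i$-BAD, then $\bra x P_i \ket x = 0$, and since $P_i = P_i^2$ and $P_i$ is real symmetric, $0=\bra xP_i\ket x=\norm{P_i\ket x}_2^2$. Hence $P_i\ket x=0$, so $\bra xP_i\ket z=\bra z P_i\ket x=0$ for all $z$. The right-hand side is then $0$, a contradiction. So $S$ contains no BAD strings, and since $S$ is a union of components, the component of every $w\in S$ is entirely GOOD.

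The only subtle point is Step 2. The statement says ``the ground state'', which may have complex or negative amplitudes, so the absolute-value trick is needed. Once non-negativity is in hand, the rest is positivity bookkeeping, and uniformity plays no role in this lemma.
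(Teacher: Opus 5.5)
Your proof is correct and is essentially the standard argument the paper cites from \cite{BBT06} (the paper gives no proof of its own): a frustration-free ground state is fixed by every $P_i$, entrywise non-negativity makes its support closed under the edges of $G$, and a $P_i$-BAD string $x$ satisfies $P_i\ket{x}=0$, so it cannot lie in the support. The one small difference is your Step 2, which obtains a non-negative ground state with the same support via the absolute-value trick rather than invoking Perron--Frobenius; this has the minor benefit of proving the statement for every frustration-free ground state, including those with complex or signed amplitudes.
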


% \subsubsection{No case}
% In this case, \cite{AG19} demonstrated the frustration vs. expansion lemma, which plays a critical cole in finding BAD strings.

In the NO case, \cite{AG19} demonstrated the frustration vs. expansion lemma, which plays a critical role in finding BAD strings.

\begin{lemma}[\cite{AG19}]\label{thm:ag19 frustration expansion}
    Let $P\in \R^{\abs \Sigma^n \times \abs \Sigma^n}$ be a uniform stoquastic projector. Let $S \subset \Sigma^n$ be an arbitrary set that contains no $P$-BAD strings.

    Denote the frustration $\gamma = 1 - \norm{P \ket S}_2^2$. Then frustration implies expansion:
    \[
    \abs{\supp (P \ket S)} \ge 
    (1 + \gamma/2) \abs S.
    \]
\end{lemma}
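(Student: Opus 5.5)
The plan is to reduce the lemma to an elementary inequality by exploiting the block structure that uniformity forces on $P$. By \cref{def:k-local uni stoc proj}, after tensoring with the identity on the untouched qudits, we can write
\[
P=\sum_{b}\ketbraa{B_b},
\]
where each $B_b$ has the form $T_j\times\{z\}$ for some $j\in[r]$ and $z\in\Sigma^{n-|\supp(P)|}$. The key structural observation is that the sets $B_b$ are pairwise disjoint. For a fixed $z$ this holds because orthogonal uniform subset states have nonnegative entries, so $\braket{T_i}{T_j}=0$ forces $T_i\cap T_j=\emptyset$. For different $z$ the sets are trivially disjoint. Next, $x$ is $P$-BAD exactly when $x$ lies in no block, since $\bra xP\ket x=\sum_b [x\in B_b]/|B_b|$. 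So the hypothesis that $S$ has no $P$-BAD strings means that $S$ is partitioned by the blocks: $S=\bigsqcup_b (S\cap B_b)$.

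Set $s_b=|S\cap B_b|$ and $r_b=s_b/|B_b|\in[0,1]$. Then
\[
P\ket S=\frac{1}{\sqrt{|S|}}\sum_b \frac{s_b}{\sqrt{|B_b|}}\ket{B_b}.
\]
Since the blocks are disjoint and the coefficients are nonnegative, the support of this state is exactly $\bigcup_{b:\,s_b>0}B_b$. Its size is therefore
\[
N=\sum_{b:\,s_b>0}|B_b|=\sum_{b:\,s_b>0} s_b/r_b.
\]
Also, by orthonormality of the $\ket{B_b}$,
\[
\norm{P\ket S}_2^2=\frac{1}{|S|}\sum_b \frac{s_b^2}{|B_b|}=\frac{1}{|S|}\sum_b s_b r_b .
\]
Using $\sum_b s_b=|S|$, this gives $\gamma=\frac{1}{|S|}\sum_b s_b(1-r_b)$.

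It then remains to compare the two quantities:
\[
N-|S|=\sum_{b:\,s_b>0} s_b\Bigl(\frac{1}{r_b}-1\Bigr)\ \ge\ \sum_b s_b(1-r_b)=\gamma|S|.
\]
This uses the scalar inequality $\frac1r-1\ge 1-r$ for $r\in(0,1]$, which is just $r+\frac1r\ge 2$. Hence $N\ge(1+\gamma)|S|\ge(1+\gamma/2)|S|$, which is in fact stronger than claimed.

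The only step needing care, and the one I expect to be the main obstacle, is the structural one: justifying that the orthonormal uniform subset states have disjoint supports, and lifting the local decomposition to disjoint global blocks indexed by $(j,z)$. Both follow from nonnegativity of the entries, as above. Everything after that is the one-line convexity inequality.
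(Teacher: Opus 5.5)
Your proof is correct and complete. The paper itself gives no proof of this lemma; it imports it from \cite{AG19}. So there is no in-paper argument to compare against.

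Your route is the natural one for uniform stoquastic projectors:
\begin{itemize}
\item Orthonormal subset states with nonnegative entries have disjoint supports. Hence $P$ is a sum of rank-one projectors onto pairwise disjoint global blocks $T_j\times\{z\}$.
\item The no-BAD hypothesis says exactly that these blocks partition $S$. This is where $\sum_b s_b=\abs{S}$ comes from, and you correctly flag it.
\item The per-block comparison $s_b(1/r_b-1)=\abs{B_b}-s_b\ge r_b(\abs{B_b}-s_b)=s_b(1-r_b)$ holds because $r_b\le 1$.
\end{itemize}

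You actually obtain $\abs{\supp(P\ket S)}\ge(1+\gamma)\abs{S}$. This is stronger than stated and would only improve constants in \cref{lem:small-bad-sequence random}. The only implicit assumption is $S\neq\emptyset$, which is needed for $\ket S$ to be defined.

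If you want to skip the per-block bookkeeping, there is a one-line finish. Let $S'=\supp(P\ket S)$. It is a union of whole blocks, and it contains $S$ by the no-BAD hypothesis. Hence $P\ket{S'}=\ket{S'}$, and by Cauchy--Schwarz
$\sqrt{1-\gamma}=\norm{P\ket S}_2\ge\bra{S'}P\ket{S}=\braket{S'}{S}=\sqrt{\abs{S}/\abs{S'}}$.
This gives $\abs{S'}\ge\abs{S}/(1-\gamma)\ge(1+\gamma)\abs{S}$.
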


% %%%%%%%%%%%%%\input{soruces/lightcone-alg}
\newcommand{\overlapseq}{\mathsf{OverlapSubseq}}

\section{Our Algorithm}\label{sec:chop}
In this section, we present a probabilistic analysis as an alternative to the deterministic one of \cite{AG19}. To this end, instead of relying on the original notion of \emph{structured} lightcones from \cite{AG19}, we introduce a new notion of \emph{unstructured} lightcones, namely of overlapping sequences.

\begin{definition}
    Let $\cP \in \ProjUSLHinstance$.
    Let $(Q_1, \ldots, Q_t)$ be a sequence of terms in $\cP$.
    The sequence is called \emph{overlapping} if every term $Q_{j}$ for $j \ge 2$ overlaps some earlier term $Q_{j'}$ for $j' < j.$
\end{definition}

\begin{theorem}\label{thm:improved-np}
$\ProjUSLH_q(k,d,0, \eta \cdot \frac{1}{\log \log n}) \in \NP$ for some constant $\eta = \eta(k,d,q)$.
\end{theorem}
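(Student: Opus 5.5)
The plan follows the technical overview. The $\NP$ verifier receives a witness string $x\in\Sigma^n$ and does the following. It enumerates every overlapping sequence $(Q_1,\ldots,Q_L)$ of terms with $L\le \ell_\eps$ and root $Q_1$. For each sequence it computes the state $Q_L\cdots Q_1$-style product applied in reverse order, i.e. $\ket{S'}=\ket{\wh{Q_1 \cdots Q_L\ket{x}}}$ obtained by applying $Q_L$ first. At each step it replaces the state by its uniform support subset state and checks whether any string in the current support is BAD. It rejects iff a BAD string is found.

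\textbf{Efficiency.} Each term touches at most $k$ qudits. Each qudit lies in at most $d$ terms, so a term overlaps at most $kd$ others. An overlapping sequence of length $L$ is therefore described by a root (at most $m$ choices) plus, for each subsequent term, a pointer to an earlier term and one of its $kd$ neighbours. This gives at most $m\,(L\cdot kd)^{L}=m\cdot 2^{O(L\log L)}$ sequences. The support grows by a factor at most $q^k$ per step, so $|\supp|\le q^{kL}$. All this is polynomial once $L\log L\le O(\log n)$. If $\ell_\eps=2^{O(1/\eps)}$ and $\eps=\eta/\log\log n$ with $\eta$ large enough (note: the constant must be chosen so that $2^{c/\eps}\cdot c/\eps \le \log n$), the verifier runs in polynomial time.

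\textbf{Completeness.} Take $x\in\supp(\ket\theta)$. Each applied term keeps us inside the connected component of $x$ in $G$, because nonzero entries of $P_i$ are edges. By \cref{thm:BBT-yes-case connected componenet only GOOD}, no BAD string is ever found, so the verifier accepts.

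\textbf{Soundness.} This is the main part. Fix any $x$ and draw $i_1,\ldots,i_T$ uniformly with $T=\Theta(n\log q/\eps)$. Let $D_t$ be the length of the backward overlapping subsequence rooted at $i_t$ within $i_1,\ldots,i_t$.

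First, bound $\Ex[D_t]$. An earlier term $i_s$ joins the lightcone only if it overlaps the current cone. Viewed backwards, the cone is a branching process: each retained term has at most $kd$ potential neighbours, and each of the $t$ earlier slots hits a given neighbour with probability $1/m$. I expect the bound to need care here. A naive count gives growth of the cone of roughly $(1+kd/m)$ per backward step, hence $\Ex D_t\approx e^{kdT/m}$. This is fine only if $T/m=O(1/\eps)$, i.e. $m\gtrsim n$. Since terms cover qudits and each qudit can be assumed to lie in some term, we have $m\ge n/k$. So $T=O(nk\log q/\eps)$ gives $T/m=O(1/\eps)$, and $\Ex D_t\le 2^{O(kd\,k\log q/\eps)}$.

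Set $\ell_\eps=\frac{4}{\eps}\max_t\Ex D_t$. By Markov, $\Pr[D_t>\ell_\eps]\le\eps/4$.

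Next, define the chopped process: skip step $t$ (keep $S_t=S_{t-1}$, $\gamma_t=0$) whenever $D_t>\ell_\eps$. The key point is that, conditioned on the past, the fresh index $i_t$ is uniform, so $\Ex[\gamma_t\mid\text{past}]\ge\eps$ in a NO instance. Since $\gamma_t\le1$, we get $\Ex[\gamma_t\mathbf 1_{D_t\le\ell_\eps}]\ge\eps-\Pr[D_t>\ell_\eps]\ge 3\eps/4$. One subtlety: $D_t$ depends only on the index sequence, so this subtraction is legitimate.

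Assume no BAD string is hit. Then \cref{thm:ag19 frustration expansion} gives $|S_t|\ge(1+\gamma_t/2)|S_{t-1}|$, so $\log|S_T|\ge\sum\gamma_t/4$. Since $\log|S_T|\le n\log q$ always, taking expectations gives $\Ex\sum\gamma_t\le 4n\log q$. This contradicts $\Ex\sum\gamma_t\ge 3\eps T/4$ for our $T$. Working with $\Ex\log|S_T|$ rather than $\Ex|S_T|$ avoids Jensen issues.

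Hence some realization hits a BAD string at a kept step $t$. Only terms in the lightcone of $i_t$ affect the relevant qudits, so the support restricted to that cone is reproduced by the short overlapping sequence of length at most $\ell_\eps$. Here I must check that the uniform-support bookkeeping commutes with dropping non-overlapping terms; this holds because non-overlapping terms act on disjoint registers and supports factor. The verifier thus finds the BAD string and rejects.

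The step I expect to be the main obstacle is bounding $\Ex D_t$ by $2^{O(1/\eps)}$ with constants depending only on $k,d,q$, together with the locality/commutation argument that justifies replacing the full prefix by its lightcone.
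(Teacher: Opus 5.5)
Your proposal is correct and follows the paper's proof essentially step for step. It uses the same lightcone-enumerating verifier, the same $(1+kd/m)$-per-step bound giving $\Ex[D_t]\le e^{kdT/m}$, Markov-based chopping with $\Ex[\widetilde\gamma_t]\ge \eps-\Pr[D_t>\ell_\eps]$, and the same backward-lightcone reduction. The differences are cosmetic: you use a pointwise bound $\ln|S_T|\le n\ln q$ instead of Jensen on $\Ex|S_T|$, and $m\ge n/k$ instead of duplicating terms.

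There are two small points to tighten. First, the lightcone must be formed in the chopped sequence; it is contained in the unchopped one, so it still has at most $D_t\le\ell_\eps$ terms. Second, supports need not factor, so that is not the justification. Instead, a trajectory argument shows that dropping the non-cone terms preserves some string with the same restriction to the root's support.
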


\begin{algorithm}[H]
\caption{}
\label{alg:verifier np loglogn}
\KwIn{An instance $\mathcal{P} = \set{P_i}_{i\in[m]}$, and $\eps > 0$.}
\KwWitness{$x \in \Sigma^n$.}
\BlankLine
\For{all root terms $P \in \mathcal P$, and sequence length $\ell \leq O(q^{32kd/ \eps})$}{
    \For{all overlapping sequences $(P, P_{i_1}, \ldots, P_{i_\ell})$}{
        \For{all reachable strings $y \in \supp(P_{i_1} \cdots P_{i_\ell} \ket x)$}{
            \If{$y$ is $P$-BAD}{
                \Return reject\;
            }
        }
    }
}
\Return accept\;
\end{algorithm}

\begin{proof}
The verifier is described in \cref{alg:verifier np loglogn}.

\paragraph{Correctness.}
Consider first a \(\YES\) instance. Let \(\ket{\theta}\) be a frustration-free ground state, and suppose that the witness \(x\) belongs to \(\supp(\ket{\theta})\). By \cref{thm:BBT-yes-case connected componenet only GOOD}, the connected component of \(x\) consists entirely of GOOD strings. Every string examined by the verifier is reachable from \(x\), and is therefore GOOD. Hence, the verifier accepts.

Now consider a \(\NO\) instance. By \cref{lem:small-bad-sequence}, for every \(x\in\Sigma^n\), there exist a term \(P\in\cP\), and an overlapping sequence $(P,P_{i_1},\ldots,P_{i_\ell})$
such that $\supp\left(P_{i_1}\cdots P_{i_\ell}\ket{x}\right)$ contains a \(P\)-BAD string \(y\), where
\[
    \ell
    \leq \ell_\eps
    \leq q^{32kd/\eps}.
\]
The verifier enumerates every such \(P\), length \(\ell\), and overlapping sequence, and therefore eventually examines this particular sequence and finds $y$. Therefore the verifier always rejects.

\paragraph{Time complexity.}
We bound the cost of each stage of the verifier.

\begin{enumerate}
    \item Fix \(\ell\leq\ell_\eps\) and \(P\in\cP\). After the root and $r-1$ applied terms have been selected, there are $r$ retained terms in total. The next term must overlap at least one of them, so there are at most $kdr$ choices. Thus, the number of overlapping sequences with $\ell$ applied terms rooted at $P$ is at most
    \begin{align*}
        &\leq \prod_{r=1}^{\ell} kdr 
        \leq (kd\ell)^\ell 
        \leq \exp \left( O\left( \ell_\eps \log( kd \ell_\eps ) \right)\right) 
        \leq \exp \left( \exp\left(O\left(\frac{kd\log q}{\eps}\right)\right)\right).
    \end{align*}

    \item Fix an overlapping sequence \((P,P_{i_1},\ldots,P_{i_\ell})\). Each applied projector acts on at most $k$ symbols and can increase the current support size by a factor of at most $q^k$. Consequently, the final support set can be enumerated in time
    \[
        \abs{ 
        \supp\ps{ P_{i_1}\cdots P_{i_\ell}\ket{x}} 
        }
        \leq q^{k\ell}.
    \]

    \item For each enumerated string \(y\), checking whether \(y\) is \(P\)-BAD requires only the \(k\) symbols on which \(P\) acts, and hence takes time \(O(k)\).
\end{enumerate}

There are \(\abs{\cP}=\poly(n)\) choices for \(P\) and at most \(\ell_\eps\) choices for \(\ell\). Therefore, the total running time is at most
\begin{align*}
    &\le \abs{\cP} \cdot \ell_\eps\ \cdot 
    \exp \left( \exp\left(O\left(\frac{kd\log q}{\eps}\right)\right)\right)
    \cdot q^{O(k\ell_\eps)} \cdot O(k)
    \\
    &\leq
    \poly(n) \cdot q^{O(kd/\eps)} \cdot 
    \exp\left(\exp\left(
        O\left(\frac{kd\log q}{\eps}\right)
    \right)\right) \cdot q^{k \frac{2}{\eps} q^{O(kd/\eps)}} \\
    &\le  \exp \ps{ \exp \ps{ O\pfrac{kd \cdot \ln q}{\eps}} }.
\end{align*}
Assuming $k,d, q = O(1)$ are all constants, it follows that there exists some constant $\eta = \eta(k,d,q)$ such that the above is bounded by $\poly(n)$ for
\[
    \eps\geq\frac{\eta}{\log\log n}.
\]
\end{proof}

\subsection{Setup}
Our main structural lemma, \cref{lem:small-bad-sequence}, shows that if the instance is $\eps$-frustrated, then from every string there is a bounded-length overlapping sequence reaching a BAD string.

To this end, we use a probabilistic argument, showing that a sufficiently long randomly chosen sequence of terms $P_{i_1},\ldots,P_{i_T}$ satisfies the following:
\begin{itemize}
    \item at some time $t$, the set $S_{t-1}$ contains a $P_{i_t}$-BAD string;
    \item there exists a relatively short overlapping subsequence that leads $x$ to a BAD string, say $w$.
\end{itemize}
Since all the terms act locally, being $P$-BAD depends only on the few qudits on which $P$ acts. Using a standard lightcone argument, we can trace backward the terms that influenced these qudits throughout our sequence. Only these terms are eventually relevant for modifying $x$ into a $P$-BAD string. It follows that we should bound the length of the longest \emph{overlapping subsequence} in order to bound the graph distance between $x$ and $w$.

\begin{definition}
    Let \(\vec Q=(Q_1,\ldots,Q_t)\) be a sequence of terms in \(\mathcal P\). The \emph{overlapping subsequence} of \(\vec Q\), denoted by \(\overlapseq(\vec Q)\), is the longest ordered subsequence \( Q_{i_1}, \ldots, Q_{i_t} \) such that $i_1 = 1$ and for every $2 \leq j \leq t$, $Q_{i_j}$ overlaps $Q_{i_k}$ for some $k < j$.
\end{definition}
Formally, it can be constructed as follows:

\begin{algorithm}[H]
\label{alg:overlapping-subseq}
\KwIn{A sequence of terms $\vec Q = (Q_1,\ldots,Q_t)$.}
\KwOut{An overlapping subsequence $\vec W \subseteq \vec Q$.}
$\vec W \gets (Q_1)$\;
\For{$j = 2,\ldots,t$}{
  \If{$\exists\, Q \in \vec W$ such that $\supp(Q_j) \cap \supp(Q) \neq \emptyset$}{
    append $Q_j$ to $\vec W$\;
  }
}
\Return $\vec W$\;
\caption{$\overlapseq$}
\end{algorithm}

Without loss of generality, we assume that $m\ge n$.\footnote{After removing unused qudits, we have $n\le km$. We can therefore duplicate all terms at most $k$ times to obtain $m\ge n$. This leaves the averaged Hamiltonian, and hence its frustration, unchanged, while increasing the degree by only a constant factor.}
Throughout, we consider the parameters
\begin{align}\label{eq:def parameters T elleps}
    T\defeq \left\lceil \frac{16m \ln q}{\eps}\right\rceil
    \qquad\text{and}\qquad
    \ell_\eps \defeq \left\lceil \frac{2}{\eps} q^{16kd/\eps} \right\rceil.
\end{align}
Fix an \(\eps\)-frustrated instance \(\cP\).
Throughout this section, we use the following notations. 
Let $x\in \Sigma^n$ be arbitrary.
Consider $T$-many indices, which we will later show how to choose:
\[
i_1,\ldots, i_T \in [m].
\]
Fixing these choices, we define the following quantities.

\begin{itemize}

    \item For every \(t\leq T\), denote by $D_t$ the length of the overlapping subsequence of the reverse sequence $(P_{i_t},\ldots,P_{i_1})$, namely
    \[
    D_t \defeqq \abs{\overlapseq (P_{i_t},P_{i_{t-1}},\ldots,P_{i_1}) }.
    \]

    \item 
    %Denote the subset state corresponding to applying the sequence $P_{i_1},\ldots, P_{i_T}$ on $x$.
    Starting from \(S_0 \defeq \set{x}\), for every $t \le T$ define the set $S_t \subseteq \Sigma^n$ as
    \[
    S_t \defeq \supp\ps{P_{i_t} \ket{S_{t-1}}}.
    \]
    We identify every set $S_t$ with its uniform subset state $\ket{S_t}$.

    \item 
    For every \(t\leq T\), define the $t$-th step frustration as
    \[
    \gamma_t\defeq 1 -\norm{P_{i_t} \ket{S_{t-1}}}_2^2.
    \]
\end{itemize}

Our goal is to show that, for some choice of indices \(i_1,\ldots,i_T\), the sets $S_1,\ldots,S_T$ steadily increase in size. In turn, $S_T$ would be too large unless a BAD string appears earlier.
To this end, let $\cI$ be an arbitrary distribution on $[m]^T$, and suppose the indices are sampled from it:
\[
(i_1,\ldots, i_T) \sim \cI.
\]
We now relate the expected size of the final set, $S_T$, in this random process, to the expected frustration at every intermediate step.

\begin{lemma}\label{lem:small-bad-sequence random}
Let $\cI \subseteq [m]^T$ be an arbitrary distribution, and let $(i_1,\ldots, i_T) \sim \cI$.

Assume that, with probability one, for every $t$ the set $S_{t-1}$ contains no $P_{i_t}$-BAD string.
Then, for every \(x\in\Sigma^n\),
\[
\Expect{(i_1,\ldots, i_T) \sim \cI}{\abs{S_T}} \ge
\exp \left(\frac 1 4 \sum_{t=1}^T \Expect{(i_1,\ldots, i_T) \sim \cI}{\gamma_t} \right).
\]
\end{lemma}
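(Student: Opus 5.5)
The approach is a pathwise multiplicative bound followed by Jensen's inequality. Fix $x$ and a realization $(i_1,\ldots,i_T)$ in the support of $\cI$. By assumption, for every $t$ the set $S_{t-1}$ contains no $P_{i_t}$-BAD string. So \cref{thm:ag19 frustration expansion}, applied with $P=P_{i_t}$ and $S=S_{t-1}$, gives
\[
\abs{S_t} = \abs{\supp(P_{i_t}\ket{S_{t-1}})} \ge (1+\gamma_t/2)\abs{S_{t-1}}.
\]
Iterating from $\abs{S_0}=1$, we get pointwise
\[
\abs{S_T} \ge \prod_{t=1}^T (1+\gamma_t/2).
\]

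Next, I convert the product into an exponential. Since $\gamma_t = 1-\norm{P_{i_t}\ket{S_{t-1}}}_2^2 \in [0,1]$ (as $P_{i_t}$ is a projector and $\ket{S_{t-1}}$ is a unit vector), I use the elementary inequality $1+u/2 \ge e^{u/4}$ for $u\in[0,1]$. This holds because $\ln(1+u/2)$ is concave and equals $0$ at $u=0$. At $u=1$ it equals $\ln(3/2)\approx 0.405 \ge 1/4$, so it dominates the linear function $u/4$ on $[0,1]$. Hence, pointwise,
\[
\abs{S_T} \ge \exp\left(\tfrac14\sum_{t=1}^T \gamma_t\right).
\]

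Finally, I take expectation over $(i_1,\ldots,i_T)\sim\cI$. Since $\exp$ is convex, Jensen's inequality gives
\[
\Ex\abs{S_T} \ge \Ex \exp\left(\tfrac14\sum_t\gamma_t\right) \ge \exp\left(\tfrac14\sum_t \Ex\gamma_t\right),
\]
using linearity of expectation inside the exponent. No independence across steps is needed, so the argument works for an arbitrary distribution $\cI$, matching the statement.

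None of these steps is a genuine obstacle. The only point needing care is to confirm that the hypotheses of \cref{thm:ag19 frustration expansion} are satisfied at every step. These are that $S_{t-1}$ is a set with no $P_{i_t}$-BAD strings, and that we identify $S_{t-1}$ with the uniform subset state $\ket{S_{t-1}}$, exactly as in the setup. Both are provided by the assumption, which holds with probability one and hence for every realization in the support of $\cI$.
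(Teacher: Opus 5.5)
Your proposal is correct and follows the paper's proof: a pathwise application of the frustration--expansion lemma, the elementary bound $1+\gamma_t/2 \ge e^{\gamma_t/4}$ for $\gamma_t\in[0,1]$, and Jensen's inequality for the convex exponential. You also note that no independence across steps is needed, which is accurate and is exactly why the lemma holds for an arbitrary distribution $\cI$.
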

\begin{proof}
By the hypothesis, for every $t$, the set $S_{t-1}$ contains no $P_{i_t}$-BAD strings, and hence we can apply \cref{thm:ag19 frustration expansion}. Using the inequality $1+\delta\geq e^{\delta/2}$ for all $\delta\in[0,1]$, we conclude
\[
    |S_t|\geq\left( 1+\frac{\gamma_t}{2} \right)|S_{t-1}|\geq e^{\gamma_t/4}|S_{t-1}|.
\]
Applying this argument inductively, taking expectations and applying Jensen's inequality, we obtain
\begin{align*}
\Expect{(i_1,\ldots, i_T) \sim \cI}{|S_T|}
\ge  \Ex_{(i_1,\ldots, i_T) \sim \cI}  \exp \left( \frac14 \sum_{t=1}^T \gamma_t\right)
\ge \exp \left(\frac 1 4 \sum_{t=1}^T \Expect{(i_1,\ldots, i_T) \sim \cI}{ \gamma_t} \right).
\end{align*}
\end{proof}

The role of the preceding lemma is to force a BAD string at an intermediate time whenever the chosen distribution makes $\sum_t\Ex[\gamma_t]$ sufficiently large. Indeed, if every fixation of indices and time $t$ had $S_{t-1}$ containing no $P_{i_t}$-BAD string, then the lemma would give $\Ex[|S_T|]>|\Sigma^n|$, contradicting the trivial bound $S_T\subseteq\Sigma^n$. Hence, for the distributions considered below, some indices and time $t$ satisfy that $S_{t-1}$ contains a $P_{i_t}$-BAD string. It remains to bound the distance from $x$ to such a string, which reduces to bounding $D_t$.

In \cref{sec:uniform index,sec:chopped-seq} we show two bounds for two distributions $\cI$.

\subsection{Uniformly Random Sequence}\label{sec:uniform index}
In this subsection we analyze the above process when $\cI$ is the uniform distribution on $[m]^T$.
First, observe that the expected frustration of a randomly chosen term is at least the frustration of $\cP$:
\begin{lemma}\label{thm:uniform frustration}
    For every $t\leq T$, if $i_t \inr [m]$ is selected uniformly at random, we have that
    \[
    \Expect{i_t \inr [m]}{\gamma_t} \ge \eps.
    \]
\end{lemma}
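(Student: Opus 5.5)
The plan is to condition on the earlier indices $i_1,\ldots,i_{t-1}$, which fixes the set $S_{t-1}$, and then average over the fresh uniform index $i_t$. For fixed $S_{t-1}$, the state $\ket{S_{t-1}}$ is a unit vector, so the NO-case promise of \cref{def:projUSLH} applies to it directly. We have $\cP\in\NO_{q,k,d,\eps}$, and the promise holds for \emph{all} states, so
\[
\bra{S_{t-1}} \frac{1}{m}\sum_{i\in[m]} P_i \ket{S_{t-1}} \le 1-\eps .
\]

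Next I rewrite the quantity we need to bound. Since each $P_i$ is a projector, $\norm{P_i\ket{S_{t-1}}}_2^2 = \bra{S_{t-1}} P_i^\dagger P_i \ket{S_{t-1}} = \bra{S_{t-1}} P_i \ket{S_{t-1}}$. Averaging over uniform $i_t\in[m]$, with $S_{t-1}$ held fixed, gives
\[
\Expect{i_t \inr [m]}{\gamma_t} = 1 - \frac{1}{m}\sum_{i\in[m]} \bra{S_{t-1}} P_i \ket{S_{t-1}} \ge \eps .
\]

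Finally, this bound holds pointwise for every realization of $(i_1,\ldots,i_{t-1})$. Taking an outer expectation over those earlier indices, by the tower property, therefore gives $\Ex[\gamma_t]\ge\eps$ for the full uniform process as well.

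There is no real obstacle here. The only points needing care are the following:
\begin{itemize}
\item $i_t$ must be independent of $S_{t-1}$; this holds because the distribution is uniform over $[m]^T$.
\item We use the identity $P^\dagger P = P$ for Hermitian projectors.
\item The NO promise is invoked on the particular subset state $\ket{S_{t-1}}$, which is legitimate because it is a normalized state.
\end{itemize}
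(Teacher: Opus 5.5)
Your proposal is correct and follows the same approach as the paper's proof. You fix $i_1,\ldots,i_{t-1}$, apply the NO-case promise to the normalized subset state $\ket{S_{t-1}}$, rewrite $\norm{P_i\ket{S_{t-1}}}_2^2=\bra{S_{t-1}}P_i\ket{S_{t-1}}$ to get the conditional expectation over the uniform index $i_t$, and conclude because the bound holds for every fixing of the earlier indices.
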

\begin{proof}
    Fix \(i_1,\ldots,i_{t-1}\) arbitrarily, and let \(\ket{S_{t-1}'}\) be the corresponding fixed state. Then, since $\cP$ is $\eps$-frustrated,
    \[
    \cExpect{i_t\inr [m]}{i_1,\ldots,i_{t-1}}{\gamma_t}
    = \frac{1}{m}\sum_{i_t \in [m]}\left(1- \norm{P_{i_t} \ket{S_{t-1}'}}^2\right)
    = 1 - \bra{ S'_{t-1}} \frac{1}{m} \sum_{i_t\in [m]}  P_{i_t} \ket{S'_{t-1}}
    \geq\eps.
    \]
The lemma follows since the above holds for every fixed choice of \(i_1,\ldots,i_{t-1}\).
\end{proof}

We observe that \cref{lem:small-bad-sequence random,thm:uniform frustration} already imply that some choice of indices must encounter a $P_{i_t}$-BAD string in $S_{t-1}$ at some time $t$. Indeed, if this never occurred, then for $T\ge \tfrac{8n}{\eps}\ln q$,
\[
\Expect{i_1,\ldots,i_T\inr[m]}{|S_T|}
\ge \exp\left(\frac14\sum_{t=1}^T\Expect{i_1,\ldots,i_T\inr[m]}{\gamma_t}\right)
\ge \exp\left(\frac{T\eps}{4}\right)
\ge q^{2n}.
\]
This contradicts the trivial bound $|S_T|\le q^n$.

% Indeed, otherwise, the support would continue expanding beyond the size $q^n$ of the entire universe, contradicting the guaranteed existence of a BAD string.

However, the issue is that such a sequence does not guarantee any \emph{bound} on the maximal path from $x$ to the BAD string, besides the trivial bound $D_t \le T$.
We address this issue next.

\subsection{The Chopped Sequence}\label{sec:chopped-seq}
In this subsection, we still pick the indices uniformly at random, but we remove some of them.
Informally, we consider the ``chopped'' random sequence of $(P_{i_1}, \ldots, P_{i_T})$, where we remove terms $P_{i_t}$ that can lead to long overlapping subsequences. We then analyze this random process and show that eventually the modified random sequence both (a) expands sufficiently and contains a BAD string, and, more importantly, (b) gives a bound on the maximal distance between $x$ and a BAD string.

Formally, define the new ``chopped'' random sequence as follows. 
The indices are chosen uniformly at random:
\[
(i_1,\ldots, i_T) \inr [m]^T.
\]
We now define the sequence $\wt P_{i_1}, \ldots, \wt P_{i_T}$, which is related to the random sequence $P_{i_1}, \ldots P_{i_T}$. For every $t$ we define
\[
\widetilde P_{i_t}\defeq
\begin{cases}
P_{i_t}, & D_t\leq\ell_\eps,\\
I, & D_t>\ell_\eps.
\end{cases}
\]
Correspondingly, we define the support states as $\wt S_0 = \set{x}$, and for all $t\geq 1$,
\[
\wt S_t \defeq \supp \ps{ \wt{P}_{i_t} \ket{\wt S_{t-1}} }.
\]
The corresponding frustration of the $t$-th step in the chopped random process is defined as 
\[
\widetilde \gamma_t \defeq 1-\norm{\widetilde P_{i_t}\ket{\wt S_{t-1}}}_2^2.
\]
We now relate the expected frustration of the chopped sequence to the chopping probability.

\begin{lemma}\label{lem:censored-frustration}
For every \(t\leq T\),
\[
\Expect{}{\widetilde\gamma_t} \geq \eps - \Prob{}{D_t > \ell_\eps}.
\]
\end{lemma}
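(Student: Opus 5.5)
The plan is to compare $\widetilde\gamma_t$ pointwise with an auxiliary ``unchopped'' frustration: the frustration that $P_{i_t}$ itself would cause on the chopped state $\ket{\wt S_{t-1}}$. Define
\[
g_t \defeqq 1-\norm{P_{i_t}\ket{\wt S_{t-1}}}_2^2 .
\]
Since $P_{i_t}$ is a projector and $\ket{\wt S_{t-1}}$ is a unit vector, $g_t\in[0,1]$. I then split according to whether the term is chopped.

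\begin{itemize}
    \item If $D_t\le \ell_\eps$, then $\wt P_{i_t}=P_{i_t}$, and hence $\widetilde\gamma_t=g_t$.
    \item If $D_t>\ell_\eps$, then $\wt P_{i_t}=I$, so $\widetilde\gamma_t = 1-\norm{\ket{\wt S_{t-1}}}_2^2 = 0 \ge g_t-1$.
\end{itemize}

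In both cases the pointwise inequality
\[
\widetilde\gamma_t \;\ge\; g_t-\mathbb{1}[D_t>\ell_\eps]
\]
holds, using only $g_t\le 1$. Taking expectations over $(i_1,\ldots,i_T)\inr[m]^T$ gives
\[
\Ex[\widetilde\gamma_t]\ge \Ex[g_t]-\Pr[D_t>\ell_\eps].
\]

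It remains to show $\Ex[g_t]\ge\eps$, and this is the same argument as \cref{thm:uniform frustration}. The key observation is that $\wt S_{t-1}$ is determined by $i_1,\ldots,i_{t-1}$ alone. This holds because each $\wt P_{i_s}$ for $s\le t-1$ depends only on $i_s$ and on $D_s$, and $D_s$ is a function of $i_1,\ldots,i_s$. Meanwhile $i_t$ is uniform and independent of the prefix. So I condition on $i_1,\ldots,i_{t-1}$, which fixes the unit vector $\ket{\wt S_{t-1}}$. Averaging over $i_t$ then gives
\[
1-\bra{\wt S_{t-1}}\tfrac1m\textstyle\sum_i P_i\ket{\wt S_{t-1}}\;\ge\;\eps ,
\]
by the $\eps$-frustration of $\cP$. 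Removing the conditioning yields $\Ex[g_t]\ge\eps$, and combining this with the previous display completes the proof.

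The main obstacle I expect is the correlation between the chopping event $\{D_t>\ell_\eps\}$ and the frustration $g_t$: the event $\{D_t>\ell_\eps\}$ depends on $i_t$ as well. So one cannot simply argue that chopping removes a ``fair'' share of frustration. The pointwise bound sidesteps this completely, at the price of charging a full unit of frustration for every chopped step, which is exactly the $\Pr[D_t>\ell_\eps]$ loss in the statement. The only structural fact needed is the measurability of $\wt S_{t-1}$ with respect to the prefix $i_1,\ldots,i_{t-1}$, and I would state it explicitly before the conditioning step.
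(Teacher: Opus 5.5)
Your proposal is correct and follows the paper's proof. The paper uses the same hybrid frustration $\overline\gamma_t = 1-\|P_{i_t}\ket{\wt S_{t-1}}\|_2^2$, writes $\widetilde\gamma_t=\overline\gamma_t\mathbf{1}_{D_t\le\ell_\eps}$, and bounds the discarded part by $\Pr[D_t>\ell_\eps]$ using $\overline\gamma_t\le 1$, which is your pointwise inequality in expectation form. Your observation that $\wt S_{t-1}$ is determined by $i_1,\ldots,i_{t-1}$ alone makes explicit the step the paper only says ``follows similarly'' to the uniform-frustration lemma.
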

\begin{proof}
Define the ``hybrid'' frustration random variable, of the original term $P_{i_t}$ with respect to the current state of the chopped process:
\[
    \overline{\gamma}_t
    \coloneqq
    1- \norm {
        P_{i_t} \ket{ \widetilde S_{t-1}}
      }_2^2.
\]
Note that if \(\widetilde P_{i_t}=I\), then by definition there is no frustration:
\[
\wt \gamma_t = 1 - \norm{I \cdot \ket{\wt S_{t-1}}}^2 = 0,
\]
and since that happens whenever $D_t > \ell_\eps$, we can write
\[
\widetilde\gamma_t = \ol \gamma_t\mathbf 1_{D_t\leq\ell_\eps}.
\]
Then, since the frustration is bounded by \(\ol \gamma_t\leq1\),
\begin{align*}
\Expect{}{\widetilde \gamma_t}
&=\Expect{}{ \ol \gamma_t\mathbf 1_{D_t \leq \ell_\eps}}\\
&=\Expect{}{\ol \gamma_t} - \Expect{}{\ol \gamma_t \mathbf 1_{D_t > \ell_\eps}}\\
&\geq \eps - 1\cdot \Prob{}{D_t>\ell_\eps },
\end{align*}
where the proof that $\Expect{}{\ol \gamma_t} \ge \eps$ follows similarly to \cref{thm:uniform frustration}.
\end{proof}

Therefore, if we can bound the probability of having a long overlapping subsequence by, say, $\eps/2$, we can guarantee both (a) the expected chopped frustration is still high, $\Expect{}{\wt{\gamma}_t} \ge \eps/2$, which is necessary for the gradual expansion of the sets $\wt S_t$, and (b) a bound on the maximal path between $x$ and a BAD string.

We now show that, for every fixed step, its overlapping subsequence is relatively short with high probability:
\begin{lemma}\label{lem:small-backward-sequence}
For every \(t\leq T\),
\[
\Pr[D_t>\ell_\eps]\leq\frac{\eps}{2}.
\]
\end{lemma}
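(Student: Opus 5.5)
We need $\Pr[D_t>\ell_\eps]\le \eps/2$. Since $\ell_\eps = \lceil \tfrac{2}{\eps} q^{16kd/\eps}\rceil$, Markov's inequality reduces this to showing
\[
\Ex[D_t] \le q^{16kd/\eps}.
\]
Indices are i.i.d.\ uniform, so the reversed sequence $(P_{i_t},\ldots,P_{i_1})$ is itself a uniform i.i.d.\ sequence of length $t$. It therefore suffices to bound the expected size of the overlapping subsequence of a uniform random sequence of length $t\le T$.

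We track the set $U_j\subseteq[n]$ of qudits covered by the retained terms after scanning $j$ terms of the reversed sequence. Initially $U_1=\supp(P_{i_t})$, so $|U_1|\le k$. A fresh uniform term overlaps $U_j$ with probability at most $d|U_j|/m$, because each qudit lies in at most $d$ terms. When it is retained, $|U_j|$ grows by at most $k$. Writing $u_j=\Ex|U_j|$ and using that the new index is independent of the past, we get
\[
u_{j+1}\le u_j\Bigl(1+\tfrac{kd}{m}\Bigr),
\qquad\text{hence}\qquad
u_t\le k\,e^{kdt/m}.
\]
Likewise, the expected number of retained terms satisfies $\Ex[D_t]\le 1+\sum_j d\,u_j/m \le 1+ \tfrac{d}{m}\sum_{j<t} k e^{kdj/m}\le e^{kdt/m}$, up to constants. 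This is a branching-process (Galton--Watson-style) comparison of the lightcone growth: each retained qudit "infects" a new term at rate $d/m$ per step, and each new term contributes at most $k$ qudits.

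Plugging in $t\le T=\lceil 16m\ln q/\eps\rceil$ gives
\[
\Ex[D_t]\le e^{kd(16\ln q/\eps+1/m)}\lesssim q^{16kd/\eps},
\]
with the lower-order term $e^{kd/m}\le e^{kd}$ absorbed using $m\ge n$. One may need a slightly looser constant, but the claimed exponent $16kd/\eps$ matches exactly. Markov's inequality then gives $\Pr[D_t>\ell_\eps]\le \Ex[D_t]\,\eps/(2q^{16kd/\eps})\le \eps/2$.

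The main obstacle is making the recursion rigorous. The probability of retention depends on the random set $U_j$, but conditioning on the first $j$ scanned indices makes the next index uniform and independent. This gives $\Ex[|U_{j+1}|\mid U_j]\le |U_j|+k\cdot d|U_j|/m$, and taking expectations linearizes the recursion with no Jensen issue. The second step needing care is the bookkeeping of constants: the additive $1$ and the rounding in $T$ must be absorbed into $q^{16kd/\eps}$. If they do not fit, I would bound $|U_j|\le kD$ directly and run the recursion on $\Ex[D]$ itself, $\Ex[D_{j+1}]\le \Ex[D_j](1+kd/m)$, which yields $\Ex[D_t]\le (1+kd/m)^t\le e^{kdT/m}$ cleanly.
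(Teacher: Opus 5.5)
Your proof is correct and is essentially the paper's. Markov's inequality reduces the claim to $\Ex[D_t]\le q^{16kd/\eps}$. Conditioning on the already-scanned indices shows each new term is retained with probability at most $kdr/m$ (equivalently $d|U_j|/m$), so the expected lightcone size grows by at most a factor $(1+kd/m)$ per step. Your fallback recursion on $\Ex[D]$ is exactly the paper's argument, and your qudit-tracking version, summed precisely, telescopes to the same bound.

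The one fix is the indexing. The root $P_{i_t}$ is retained deterministically and only $t-1$ further terms are scanned, so $\Ex[D_t]\le(1+kd/m)^{t-1}\le e^{kd(T-1)/m}<q^{16kd/\eps}$, because $T-1<16m\ln q/\eps$. With exponent $t$, the leftover factor $e^{kd/m}$ cannot be ``absorbed'' using $m\ge n$: $\ell_\eps$ is fixed, and Markov needs the exact inequality.
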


\begin{proof}
Fix \(t\), and consider the construction of \(\overlapseq(P_{i_t},\ldots,P_{i_1})\) at step $j\in[t-1,1]$. Suppose that \(r\) terms have already been retained. Therefore at most \(kdr\) terms of \(\cP\) overlap at least one of them, and since \(i_j\) is uniform and independent of the previously fixed indices \(i_t,\ldots,i_{j+1}\), it follows that the current term is retained with probability at most
\[
\cProb{i_j}{i_t,\ldots,i_{j+1}}{\text{\(P_{i_j}\) is retained}}\leq\frac{kdr}{m}.
\]
Thus, the expected number of retained terms grows by a factor of at most \((1+\tfrac{kd}{m})\) at every step. Starting from the single root term, and using $t-1\leq T-1<\frac{16m\ln q}{\eps}$, we obtain
\[
\Expect{}{D_t}
= \Expect{}{\abs{\overlapseq(P_{i_t}, \ldots, P_{i_1})}}
\leq \left(1+\frac{kd}{m}\right)^{t-1}\leq e^{kd(t-1)/m} \leq q^{16kd/\eps}.
\]
Therefore, by Markov's inequality,
\[
\Prob{}{D_t>\ell_\eps}\leq\frac{\Expect{}{D_t}}{\ell_\eps}\leq\frac{\eps}{2}.
\]
\end{proof}

Plugging the above bounds into \cref{lem:small-bad-sequence random}, we obtain:

\begin{lemma}\label{thm:expected set of choppset set}
    Let $x\in \Sigma^n$ be arbitrary.
    Suppose that, with probability one, for every $t$ the set $\wt S_{t-1}$ contains no $\wt P_{i_t}$-BAD string.
    Then,
    \[
    \Ex [|\wt S_T|] \ge q^{2n}.
    \]
\end{lemma}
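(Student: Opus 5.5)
We will apply \cref{lem:small-bad-sequence random} to the chopped process, with $\cI$ the uniform distribution on $[m]^T$. Since the chopped sequence is a deterministic function of $(i_1,\ldots,i_T)$, this is legitimate as long as the lemma's argument goes through for the operators $\wt P_{i_t}$ in place of $P_{i_t}$.

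That lemma only used \cref{thm:ag19 frustration expansion} at each step, so we first check that frustration-expansion holds for each $\wt P_{i_t}$. If $\wt P_{i_t}=P_{i_t}$, the hypothesis says $\wt S_{t-1}$ contains no $P_{i_t}$-BAD string, so \cref{thm:ag19 frustration expansion} gives $|\wt S_t|\ge(1+\wt\gamma_t/2)|\wt S_{t-1}|$. If $\wt P_{i_t}=I$, then $\wt S_t=\wt S_{t-1}$ and $\wt\gamma_t=0$, so the same inequality holds with equality. Note that $I$ has no BAD strings at all, so the hypothesis is consistent in this case.

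Combining the per-step inequality with $1+\delta\ge e^{\delta/4}$ for $\delta\in[0,1]$, and then applying Jensen's inequality exactly as in the proof of \cref{lem:small-bad-sequence random}, gives
\[
\Ex[|\wt S_T|]\ge \exp\left(\frac14\sum_{t=1}^T\Ex[\wt\gamma_t]\right).
\]

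Next we lower-bound each expectation. By \cref{lem:censored-frustration} and \cref{lem:small-backward-sequence},
\[
\Ex[\wt\gamma_t]\ge \eps-\Pr[D_t>\ell_\eps]\ge \eps/2 .
\]
One point needs care inside \cref{lem:censored-frustration}: the claim $\Ex[\ol\gamma_t]\ge\eps$. Here $\wt S_{t-1}$ depends only on $i_1,\ldots,i_{t-1}$. This is because the chopping decision at each step $s<t$ depends only on $D_s$, which is a function of $i_1,\ldots,i_s$. Hence $i_t$ is uniform and independent of $\wt S_{t-1}$, and conditioning on $i_1,\ldots,i_{t-1}$ reproduces the argument of \cref{thm:uniform frustration}.

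Summing over $t$ gives $\sum_t\Ex[\wt\gamma_t]\ge T\eps/2$. With $T=\lceil 16m\ln q/\eps\rceil$ and $m\ge n$,
\[
\Ex[|\wt S_T|]\ge \exp(T\eps/8)\ge \exp(2m\ln q)=q^{2m}\ge q^{2n}.
\]

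The main subtlety is verifying this measurability and independence structure: that the identity steps preserve the expansion inequality, and that the chopping does not bias the distribution of $i_t$ relative to $\wt S_{t-1}$. Everything else is bookkeeping.
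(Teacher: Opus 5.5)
Your proposal is correct and follows the paper's proof: you apply \cref{lem:small-bad-sequence random} to the chopped process, obtain $\Ex[\wt\gamma_t]\ge\eps/2$ from \cref{lem:censored-frustration,lem:small-backward-sequence}, and plug in $T$. Your checks that identity steps satisfy the expansion inequality and that $i_t$ is independent of $\wt S_{t-1}$ fill in details the paper leaves implicit. One small slip: the inequality you need is $1+\gamma/2\ge e^{\gamma/4}$ for $\gamma\in[0,1]$ (the paper's $1+\delta\ge e^{\delta/2}$), not $1+\delta\ge e^{\delta/4}$, although your displayed bound with the factor $\frac14$ is the correct one.
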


\begin{proof}
Combining \cref{lem:censored-frustration,lem:small-backward-sequence}, we have that $\Expect{}{\wt \gamma_t} \ge \eps /2$.
Let $\wt{\cI}$ denote the distribution on chopped sequences induced by the uniformly random indices. Applying \cref{lem:small-bad-sequence random} to $\wt{\cI}$ and the states $\wt{S}_t$ gives
\begin{equation*}
\Expect{}{\abs{ \wt S_T }}
\ge
\exp \ps {\frac 1 4 \sum_{t=1}^T \Expect{}{\widetilde \gamma_t} }
\geq \exp \ps{ \frac{ \eps T}{8} }
\geq q^{2n}. \qedhere 
\end{equation*}
\end{proof}

At a high level, the preceding expectation bound forces some realization of the chopped process to encounter a BAD string.
Moreover, by design, the chopped process induces a relatively short path from any string to a BAD one, allowing us to conclude:

\begin{lemma}\label{lem:small-bad-sequence}
For every \(x\in\Sigma^n\), there exist a term \(P\in\cP\) and a rooted overlapping sequence $(P,Q_1,\ldots,Q_\ell)$, of length $\ell < \ell_\eps$, such that $\supp(Q_1\cdots Q_\ell\ket{x})$ contains a $P$-BAD string.
\end{lemma}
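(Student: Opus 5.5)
We argue by contradiction from \cref{thm:expected set of choppset set}, and then run a lightcone argument on the chopped process. Fix $x$. Since $\wt S_T \subseteq \Sigma^n$, we have $\abs{\wt S_T} \le q^n < q^{2n}$ for every realization. So the hypothesis of \cref{thm:expected set of choppset set} must fail. Hence there is a realization $(i_1,\ldots,i_T)$ and a time $t$ such that $\wt S_{t-1}$ contains a $\wt P_{i_t}$-BAD string. Because the identity has no BAD strings, necessarily $\wt P_{i_t} = P_{i_t}$, and so $D_t \le \ell_\eps$. We take the first such $t$ and set $P \defeqq P_{i_t}$. Then $\wt S_{t-1} = \supp(\wt P_{i_{t-1}}\cdots \wt P_{i_1}\ket{x})$ contains a $P$-BAD string $y$. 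Note that frustration-freeness up to time $t-1$ is not needed here: supports of products of nonnegative matrices compose, so $\wt S_{t-1}$ is exactly this support.

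Next we shrink the product to the lightcone of $P$. Let $(P,Q_1,\ldots,Q_\ell)$ be the overlapping subsequence, in the sense of \cref{alg:overlapping-subseq}, of the reversed chopped sequence $(P,\wt P_{i_{t-1}},\ldots,\wt P_{i_1})$ with identity terms skipped. This is a subsequence of the overlapping subsequence of $(P_{i_t},\ldots,P_{i_1})$. To see this, note that dropping terms from the input can only shrink the retained set: by induction, the retained set of the subsequence is contained in the original retained set. Therefore $\ell + 1 \le D_t \le \ell_\eps$, which gives $\ell < \ell_\eps$.

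The key claim is the following. Let $A$ be the union of the supports of $P$ and of the retained terms. Then the set of restrictions to $\supp(P)$ of strings in $\supp(\wt P_{i_{t-1}}\cdots\wt P_{i_1}\ket{x})$ equals the corresponding set for $\supp(Q_1\cdots Q_\ell\ket{x})$. We prove this by processing the terms from time $t-1$ down to $1$, maintaining the set of qudits $A_j$ covered by terms retained so far; the relevant region only grows backward in time. A non-retained term at time $j$ acts on qudits disjoint from every retained term at later times. Qudits of $A_j$ are therefore untouched by it, and the claim follows by commuting non-retained tensor factors past the retained ones.

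Concretely, a non-retained term $R$ at time $j$ acts on qudits disjoint from every later retained term. Since the projectors are nonnegative and act as $P'\otimes I$, we can commute it to the front of the product, i.e., make it the last operator applied. This is valid because operators on disjoint supports commute, and the retained terms applied after it in time are disjoint from it. We repeat this for all non-retained terms, each commuting with all retained terms and with the other non-retained terms it must pass. We must check the order carefully: a non-retained term may overlap an earlier-in-time retained term, but it only needs to move past later-in-time operators. All of those are either retained, hence disjoint from it, or non-retained, which we commute step by step. A cleaner route is induction on the number of non-retained terms, removing the latest-in-time one. After this step the product has the form $(\text{non-retained product}) \cdot Q_1\cdots Q_\ell\ket{x}$, where the non-retained product acts on qudits disjoint from $\supp(P)$. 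Nonnegativity makes the support of the product the image of the supports. Since $\supp(P)\subseteq A$ is disjoint from all non-retained terms, applying the non-retained operators only changes qudits outside $\supp(P)$.

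It remains to transfer BADness. Whether a string is $P$-BAD depends only on its restriction to $\supp(P)$. By the key claim, the BAD restriction $y|_{\supp(P)}$ is realized by some $y' \in \supp(Q_1\cdots Q_\ell\ket{x})$, which is then also $P$-BAD. I expect the main obstacle to be the commutation step, specifically verifying that every non-retained term can be moved past all later operators without ever passing a retained term it overlaps. This holds precisely by the definition of retention.
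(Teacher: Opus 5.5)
Your proposal is correct and follows the paper's proof: contradicting the expected-size bound for the chopped process gives a time $t$ with $\wt P_{i_t}=P_{i_t}$ and $D_t\le\ell_\eps$; the identity-free backward lightcone of the chopped sequence lies inside the overlapping subsequence of $(P_{i_t},\ldots,P_{i_1})$, which gives $\ell+1\le D_t$; and nonnegativity plus locality transfers the $P$-BAD pattern to $\supp(Q_1\cdots Q_\ell\ket{x})$. Your inductive rewriting of the product as $N\cdot Q_1\cdots Q_\ell$, moving the latest-in-time non-retained term past later (hence disjoint) retained terms, is a more careful version of the step the paper only sketches, but two remarks apply: non-retained terms never need to commute with one another, so drop the earlier phrasing that suggests they do, and your argument proves only containment of the restricted sets (which is all you need), not the equality you state.
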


\begin{proof}
Fix $x\in\Sigma^n$. Suppose, toward a contradiction, that no fixing of the indices encounters a $\wt{P}_{i_t}$-BAD string in $\wt{S}_{t-1}$ at any time $t$.
Then the hypothesis of \cref{thm:expected set of choppset set} holds for every choice of indices, and hence
\[
\Ex[|\wt{S}_T|]\ge q^{2n}.
\]
This contradicts the trivial bound $|\wt{S}_T|\le q^n$. Therefore, for some $i_1,\ldots, i_T$ and some time $t$, the set $\wt{S}_{t-1}$ contains a $\wt{P}_{i_t}$-BAD string. Necessarily $\wt{P}_{i_t}=P_{i_t}$, since no string is $I$-BAD. By the definition of the chopped sequence, this implies $D_t\le\ell_\eps$.

Finally, form the backward lightcone of the \emph{chopped} sequence $(\wt{P}_{i_t},\ldots,\wt{P}_{i_1})$, discarding identity operators. Denote this lightcone, with its root listed first, by $(P_{i_t},Q_1,\ldots,Q_\ell)$. Replacing terms by the identity can only remove dependencies, so this lightcone is contained in $\overlapseq(P_{i_t},\ldots,P_{i_1})$ and contains at most \(D_t\) terms (including its root \(P_{i_t}\)).

We now show that the non-root terms of this lightcone still lead from \(x\) to a \(P_{i_t}\)-BAD string. To this end, choose a $P_{i_t}$-BAD string $y\in\wt S_{t-1}$. Since the projectors are entrywise non-negative,
\begin{align}\label{eq:nonnegativity}
    0<\bra{y}\wt{P}_{i_{t-1}}\cdots\wt{P}_{i_1}\ket{x}.
\end{align}
Let $z\defeq y|_{\supp(P_{i_t})}$ be the local BAD pattern witnessed by $y$. Consider any term of the chopped sequence $\wt P_{i_j}$ that does not appear among $Q_1,\ldots,Q_\ell$. By construction of the backward lightcone, this term shares no qudit with $P_{i_t}$ or with any retained term $Q_r$ applied after it. Thus, any change caused by this term cannot propagate to the final symbols on $\supp(P_{i_t})$. Removing all such terms from the product in \cref{eq:nonnegativity} therefore gives some $y'\in\Sigma^n$ such that
\[
0<\bra{y'}Q_1\cdots Q_\ell\ket{x},
\qquad
y'|_{\supp(P_{i_t})}=z.
\]
Hence $y'\in\supp(Q_1\cdots Q_\ell\ket{x})$ is also $P_{i_t}$-BAD, which has distance $\ell \le D_t -1 \le \ell_\eps$ from $x$, as required.
\end{proof}

\newcommand{\parent}{\mathsf{parent}}
\section{An Instance For Which Exponentially Many Overlapping Sequences Are $\delta$-valid}\label{sec:example delta valid cones}
In this section, we show for every $\delta > 0$, an explicit $O(\delta)$ frustrated instance $\cP\in\ProjUSLHinstance$ which exhibits the following phenomenon:
a large subset of the rooted overlapping sequences are sequentially $\delta$-frustrated with respect to some single witness $x\in \Sigma^n$.
Thus, restricting the enumeration only to such a subset does not improve the worst-case dependence on the sequence length.

To have some sense of the parameter, the reader can think of any $\delta\le 1-\sqrt{2/3}$, for which we may take constant-size alphabet dimension $q=3$ (and the set $A=\set{0,1}$ downstream). In particular, for all sufficiently large $n$, the construction applies with constant alphabet size to any sub-constant frustration $\delta=o(1)$.

\subsection{Construction}
Let $\delta\in(0,1)$ be arbitrary. 
Fix an integer $k\ge2$.
Let the alphabet size be $q\defeqq \ceil{\tfrac{2}{(1-\delta)^2}}$, and set $\Sigma = \set{0, \dots, q-1}$. We fix the set
\[
A\defeq \set{0, \dots, \lfloor(1-\delta)q\rfloor -1 }\subseteq \Sigma.
\]
We begin by describing the interaction graph.
Let $T\ge3$ be an integer TBD.
Consider the complete depth-$T$ $k$-ary tree with root $\emptyset$, whose vertex set is defined as
\[
V\defeqq \set{\emptyset} \bigcup_{t=1}^{T-1}[k]^t. 
\]
Internal nodes at depth $t < T-1$ are denoted by strings $v\in[k]^t$, and their children are denoted by $v1,\dots,vk$. Each of them is naturally associated with their parent, which is defined as
\[
\parent(vj) \defeq v.
\]
We associate one qudit over $\Sigma$ per node, so $[n]=V$, and therefore we choose $T$ such that
\[
n = \frac{k^T-1}{k-1}.
\]
Next, we describe the Hamiltonian's terms.
The instance is defined as  $\cP = \set{P^{(v)}}_{v\in[n]}$ where every term is defined as
\begin{align}\label{eq:ex construction}
P^{(v)} \defeq \ketbra{T_v}{T_v},
\qquad \text{ where } \qquad
\ket{T_v} \defeq
\begin{cases}
\ket{A}_v\bigotimes_{j=1}^{k}\ket{\Sigma}_{vj}, & v\text{ internal},\\
\ket{A}_v, & v\text{ leaf}.
\end{cases}
\end{align}
\Cref{fig:ex-construction} illustrates the construction.

\subsection{Validity of the Instance}
First, observe that the instance is valid:
\begin{lemma}
    $\cP \in \ProjUSLHinstancee_q(k+1, d=2)$.
\end{lemma}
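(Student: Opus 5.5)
We verify the three requirements of \cref{def:k-local uni stoc proj} for each term $P^{(v)}$, and then the degree bound of \cref{def:projUSLH instance}. The proof is a direct inspection of the construction in \cref{eq:ex construction}; no earlier lemma is needed beyond the definitions.

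\emph{Stoquastic, uniform projector.} Each $\ket{T_v}$ is a tensor product of uniform subset states ($\ket{A}$ on qudit $v$ and $\ket{\Sigma}$ on each child). It is therefore itself the uniform subset state $\ket{A\times\Sigma^{k}}$ (respectively $\ket{A}$ for a leaf) on the register $\{v,v1,\dots,vk\}$. It has unit norm, since each factor is a normalized uniform subset state; $A\neq\emptyset$ holds because $\lfloor(1-\delta)q\rfloor\ge 1$ by the choice $q\ge 2/(1-\delta)^2$. Hence $P^{(v)}=\ketbraa{T_v}\otimes I$ is a rank-one orthogonal projector whose nontrivial part is a sum of a single ($r=1$) uniform subset projector. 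This gives uniformity, and the orthonormality condition holds trivially. Its entries are $1/(|A|q^k)$ or $0$, so it is entrywise real and non-negative.

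\emph{Locality and degree.} The support of $P^{(v)}$ is $\{v\}\cup\{v1,\dots,vk\}$ for internal $v$, and $\{v\}$ for leaves, so it is at most $k+1$ qudits, i.e.\ $(k+1)$-local. For the degree, fix a node $u$. Qudit $u$ lies in $\supp(P^{(w)})$ only if $w=u$ or $u$ is a child of $w$, i.e.\ $w=\parent(u)$. Since every non-root node has a unique parent and the root has none, each qudit participates in at most $2$ terms, so $d=2$. Finally, the number of qudits is $|V|=\sum_{t=0}^{T-1}k^t=\frac{k^T-1}{k-1}=n$, consistent with the indexing.

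The only mildly delicate point is correctly counting the tree levels, i.e.\ confirming that leaves are exactly the depth-$(T-1)$ nodes and receive the single-qudit term. That check, together with the nonemptiness of $A$, is what I would verify most carefully; the rest is immediate.
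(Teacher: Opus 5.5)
Your proof is correct and follows the paper's approach: you check the construction directly, noting that each $\ket{T_v}$ is a single uniform subset state on $A\times\Sigma^k$ (or on $A$ for a leaf), that each term acts on at most $k+1$ qudits, and that each qudit is touched only by $P^{(v)}$ and $P^{(\mathsf{parent}(v))}$. Your additional checks (that $A$ is nonempty, that the terms are entrywise non-negative, and the node count) are details the paper leaves implicit; one small slip is that a leaf term's nonzero entries are $1/|A|$ rather than $1/(|A|q^k)$, which does not affect non-negativity.
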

\begin{proof}
    By construction, each $P^{(v)}$ is either $1$-local or $(k+1)$-local.
    Additionally, observe that each non-root qudit is acted upon only by $P^{(v)}$ and $P^{(\parent(v))}$, so each qudit is acted upon by at most $d=2$ terms. 
    Finally, each $\ket{T_v}$ is a singleton uniform subset state, corresponding to either the set $A \times (\Sigma)^k$ or $A$.
\end{proof}

\begin{figure}[t]
  \centering
  \resizebox{\textwidth}{!}{%
  \begin{tikzpicture}[
    qubit/.style={circle,draw,thick,minimum size=5.5mm,inner sep=0pt,font=\scriptsize,fill=white},
    block/.style={draw,thick,rounded corners=3pt,fill=blue!4,inner sep=5pt},
    blocktitle/.style={font=\small\bfseries,inner sep=1pt},
    conefill/.style={fill=orange!22,draw=orange!55!black,thick,opacity=0.85},
    solidline/.style={very thick,black},
  ]
  \def\drawblock#1#2#3#4{%
    \node[qubit] (#3-top) at (#1, #2+0.45) {$#4$};
    \node[qubit] (#3-c1) at (#1-0.95, #2-0.45) {$#4 1$};
    \node[qubit] (#3-c2) at (#1,      #2-0.45) {$#4 2$};
    \node[qubit] (#3-c3) at (#1+0.95, #2-0.45) {$#4 3$};
    \node[blocktitle] at (#1, #2+1.15) {$P_{#4}$};
    \begin{scope}[on background layer]
      \node[block,fit=(#3-top)(#3-c1)(#3-c3)] (#3-blk) {};
    \end{scope}
  }
  % root (with relabeled children 1,2,3)
  \node[qubit] (R-top) at (0, 6.5+0.45) {$\emptyset$};
  \node[qubit] (R-c1) at (-0.95, 6.5-0.45) {$1$};
  \node[qubit] (R-c2) at ( 0,    6.5-0.45) {$2$};
  \node[qubit] (R-c3) at ( 0.95, 6.5-0.45) {$3$};
  \node[blocktitle] at (0, 6.5+1.15) {$P_{\emptyset}$};
  \begin{scope}[on background layer]
    \node[block,fit=(R-top)(R-c1)(R-c3)] (R-blk) {};
  \end{scope}
  % layer 2
  \drawblock{-7}{3.5}{A}{1}
  \drawblock{0}{3.5}{B}{2}
  \drawblock{7}{3.5}{C}{3}
  % layer 1 (expand two grandchildren each under P_1 and P_3)
  \drawblock{-9.2}{0}{A1}{11}
  \drawblock{-5.0}{0}{A2}{12}
  \drawblock{5.0}{0}{C1}{31}
  \drawblock{9.2}{0}{C2}{32}
  \node[font=\Large\bfseries,gray] at ( 0,0) {$\cdots$};
  % shaped cones from parent's i-th child-slot to child block (its left and right sides)
  \foreach \pa/\ch/\i in {R/A/1, R/B/2, R/C/3, A/A1/1, A/A2/2, C/C1/1, C/C2/2}{
    \begin{scope}[on background layer]
      \fill[conefill] (\pa-c\i.south)
        -- ([xshift=-2pt]\ch-blk.north west)
        -- ([xshift=2pt]\ch-blk.north east)
        -- cycle;
    \end{scope}
  }
  % shared-register lines: split at midpoint — decorated half toward parent, plain solid toward child
  \foreach \pa/\ch/\i in {R/A/1, R/B/2, R/C/3, A/A1/1, A/A2/2, C/C1/1, C/C2/2}{
    \path (\pa-c\i) -- (\ch-top) coordinate[pos=0.5] (mid-\pa-\ch);
    % parent half: snake/coil
    \draw[very thick,decorate,decoration={coil,aspect=0.5,segment length=2.2mm,amplitude=0.7mm}]
      (\pa-c\i) -- (mid-\pa-\ch);
    % child half: plain solid
    \draw[very thick] (mid-\pa-\ch) -- (\ch-top);
  }
  \end{tikzpicture}%
  }
  \caption{The instance $\mathcal P$ for $k=\ell=3$. Circles denote qudits, labeled by their register name $v$; each block groups the $k+1$ qudits acted on by a single term $P^{(v)}$, with the ``parent'' register $v$ on top and the ``child'' registers $v1,\dots,vk$ beneath. Every register $v$ is shared by exactly two terms, $P^{(v)}$ and $P^{(\parent(v))}$. The line and shaded cone mark this overlap, connecting the register's appearance as the parent's $i$-th child slot to its appearance as the child block's top register. The two ends are styled differently to mark the asymmetric action on the shared register: the parent applies $\ketbraa{\Sigma}_v$ (coiled end) and the child applies $\ketbraa{A}_v$ (straight end).}
  \label{fig:ex-construction}
\end{figure}

Second, we show that the instance is frustrated, thereby a NO case.

\begin{lemma}\label{lem:energy}
$\cP$ is $\delta/8k$-frustrated.
\end{lemma}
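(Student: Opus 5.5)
The plan is a local operator inequality. I will show that $\frac1n\sum_{v}P^{(v)}\preceq(1-\frac{\delta}{8k})I$, which is exactly $\delta/8k$-frustration. Throughout, $\Pi^{w}_A\defeq\ketbraa{A}_w\otimes I$ and $\Pi^{w}_\Sigma\defeq\ketbraa{\Sigma}_w\otimes I$ denote rank-one projectors on the single qudit $w$. The source of frustration is that on each non-root register $w$, the child term $P^{(w)}$ wants $\ket{A}_w$ while the parent term $P^{(\parent(w))}$ wants $\ket{\Sigma}_w$. These two states have overlap
\[
c^2\defeq\abs{\braket{A}{\Sigma}}^2=\frac{\abs A}{q}=\frac{\lfloor(1-\delta)q\rfloor}{q}\le 1-\delta .
\]

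\textbf{Step 1: split each term.} Since $P^{(v)}$ is a tensor product of commuting projectors, we have $P^{(v)}\preceq\Pi^v_A$, and $P^{(v)}\preceq\Pi^{vj}_\Sigma$ for every child $vj$. Splitting $P^{(v)}=\frac12P^{(v)}+\frac12P^{(v)}$ and averaging the second bound over the $k$ children gives, for internal $v$,
\[
P^{(v)}\preceq\tfrac12\Pi^v_A+\tfrac1{2k}\textstyle\sum_{j=1}^k\Pi^{vj}_\Sigma .
\]
For a leaf $v$, I use $P^{(v)}=\frac12\Pi^v_A+\frac12\Pi^v_A\preceq\frac12\Pi^v_A+\frac12 I$.

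\textbf{Step 2: regroup by qudit.} Summing over $v$ and collecting the terms on each register, I get
\[
\sum_v P^{(v)}\preceq \tfrac12\Pi^{\emptyset}_A+\sum_{w\neq\emptyset}X_w+\tfrac{L}{2}I,
\qquad X_w\defeq\tfrac12\Pi^w_A+\tfrac1{2k}\Pi^w_\Sigma ,
\]
where $L$ is the number of leaves. Each $X_w$ acts on a single qudit, so the norm of the sum is at most the sum of the norms.

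\textbf{Step 3: the one-qudit eigenvalue bound (the only real computation).} For two rank-one projectors with overlap $c^2$ and weights $a,b>0$, the top eigenvalue is obtained by restricting to their two-dimensional span:
\[
\lambda_{\max}(a\Pi_1+b\Pi_2)=\tfrac{a+b}{2}+\sqrt{\bigl(\tfrac{a+b}{2}\bigr)^2-ab(1-c^2)}\le a+b-\tfrac{ab(1-c^2)}{a+b}.
\]
The inequality uses $\sqrt{s^2-u}\le s-u/(2s)$. With $a=\frac12$, $b=\frac1{2k}$, this gives $\frac{ab}{a+b}=\frac1{2(k+1)}$, hence
\[
\norm{X_w}\le\tfrac12+\tfrac1{2k}-\tfrac{\delta}{2(k+1)} .
\]

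\textbf{Step 4: count.} Let $I_{\mathrm{int}}=(n-1)/k$ be the number of internal nodes, so $L=n-I_{\mathrm{int}}$. Summing the bounds from Steps 2 and 3,
\[
\Bigl\|\sum_vP^{(v)}\Bigr\|\le\tfrac12+(n-1)\Bigl(\tfrac12+\tfrac1{2k}\Bigr)+\tfrac L2-\tfrac{(n-1)\delta}{2(k+1)} .
\]
The first three terms add up to exactly $\frac n2+\frac{I_{\mathrm{int}}}2+\frac L2=n$. Dividing by $n$, and using $n\ge 2$ (so $\frac{n-1}{n}\ge\frac12$) and $k+1\le 2k$, the energy is at most
\[
1-\frac{(n-1)\delta}{2(k+1)n}\le 1-\frac{\delta}{8k}.
\]

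The main point to get right is the choice of weights in Step 1. Putting a fixed constant fraction of each parent term on every child is what makes every non-root register carry a genuinely two-sided conflict, while the total weight still sums exactly to $n$. Once that split is fixed, the rest is the elementary two-projector eigenvalue computation in Step 3.
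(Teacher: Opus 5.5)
Your proof is correct, and it takes a genuinely different route from the paper's.

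The paper uses a partial matching. It pairs each parent at depth $T-2$ with its first (leaf) child, and bounds each pair by the unweighted eigenvalue $\lambda_{\max}(|A\rangle\langle A|+|\Sigma\rangle\langle\Sigma|)=1+\mu\le 2-\delta/2$. Every other term is bounded trivially by $1$, and the count $|W|/n\ge 1/(2k)$ finishes the argument. Along the way it passes through $\|P\psi\|^2\le\|P\psi\|$ and $\sqrt{x}+\sqrt{y}\le\sqrt{2(x+y)}$, which costs a factor of $2$ but is harmless.

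You instead use a fractional cover. Splitting each internal term as $\frac12\Pi^v_A+\frac1{2k}\sum_j\Pi^{vj}_\Sigma$ puts a weighted conflict $X_w$ on \emph{every} non-root register. The weighted two-projector bound $a+b-\frac{ab(1-c^2)}{a+b}$ handles each one, and the weights total exactly $n$ because $I_{\mathrm{int}}=(n-1)/k$. The Löwner bounds in Step~1, the characteristic polynomial $\lambda^2-(a+b)\lambda+ab(1-c^2)$, the estimate $\sqrt{s^2-u}\le s-u/(2s)$, and the final count all check out.

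What your approach buys is a direct operator inequality $\frac1n\sum_v P^{(v)}\preceq\bigl(1-\frac{(n-1)\delta}{2(k+1)n}\bigr)I$ that uses every edge of the tree. This gives a better constant, about $\delta/(2(k+1))$ rather than $\delta/(8k)$. It also does not rely on the bottom layer holding a $1/(2k)$ fraction of the nodes, so it extends to trees of non-uniform arity at most $k$. The paper's approach buys simplicity: it needs only the symmetric, unweighted $2\times 2$ eigenvalue computation and a crude count of disjoint pairs, at the price of the constant.
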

\begin{proof}
Let $\ket \psi$ be an arbitrary state, and consider its energy:
Recall that $|V|=n$, and hence
\begin{align}\label{eq:frustration}
    \bra \psi \frac{1}{n} \sum_{v\in V} P^{(v)} \ket \psi
= \frac1n \sum_{v\in V} \norm{P^{(v)}\ket\psi}^2
\le \frac1n \sum_{v\in V} \norm{P^{(v)}\ket\psi}.
\end{align}
Fix a pair of parent-child registers $(v,v1)$, and we show how to bound $\norm{P^{(v)} \ket \psi} + \norm{P^{(v1)} \ket \psi}$. Later we'll show how to pair all registers in $V$.

Each pair of terms $P^{(v)},P^{(v1)}$ acts simultaneously only on register $v1$. Since $\ketbraa\phi\preceq I$ for any unit-state $\ket\phi$, we have that each projector is smaller in the Löwner partial order than the following projectors:
\begin{equation}\label{eq:Pv lowener}
    \begin{split}
        P^{(v)} &\preceq \ketbra{\Sigma}{\Sigma}_{v1}\otimes I_{v,v2,\ldots,vk} = \ketbraa{\Sigma}_{v1} \\
P^{(v1)} &\preceq \ketbra{A}{A}_{v1}\otimes I_{v11,\ldots, v1k} = \ketbraa{A}_{v1}.
    \end{split}
\end{equation}
Therefore, using the inequality $\sqrt x + \sqrt y \le \sqrt{2(x+y)}$ we conclude that
\begin{align*}
    \norm{P^{(v)} \ket \psi} + \norm{P^{(v1)} \ket \psi}
    &=
    \sqrt{ \bra\psi P^{(v)}\ket\psi } + \sqrt{ \bra\psi P^{(v1)}\ket\psi} \\
    &\le
    \sqrt{2\ps{\bra \psi P^{(v)} \ket \psi + \bra \psi P^{(v1)} \ket \psi }} \\
    &\le
    \sqrt{ 2\bra{\psi} \ps{ \ketbraa{\Sigma}_{v1} + \ketbraa{A}_{v1}} \ket \psi }\\
    &\defeq
    \sqrt{2\bra\psi M^{(v1)}\ket\psi}.
\end{align*}
In \cref{thm:spectra of Mv1}, we show that the operator $M^{(v1)}$  has maximal eigenvalue $\lambda_{\max}(M^{(v1)}) \le 2- \delta/2$. Using the inequality $\sqrt{1-x} \le 1-x/2$ for $x\in[0,1]$, we upper bound the above by 
\[
\norm{P^{(v)} \ket \psi} + \norm{P^{(v1)} \ket \psi} \le \sqrt{4-\delta} = 2 \sqrt{1 - \delta/4} \le 2-\delta/4.
\]
Now, let $W \subseteq V$ be the set of $(T-2)$-depth nodes, namely the internals whose children are all leaves, so $\abs{W}=k^{T-2}$. Pair each $v\in W$ with its first child $v1$. 
Therefore, using $T\ge3$ and $k\ge2$,
\[
\frac{\abs{W}}{n} = k^{T-2} \frac{k-1}{k^T-1} \ge \frac{k-1}{k^2} \ge \frac{1}{2k}.
\]
Finally, we use the preceding pairwise bound and bound the remaining $n-2|W|$ projection norms trivially by $1$. Plugging it into \cref{eq:frustration}, we conclude that
\[
\bra\psi\frac1n\sum_{v\in V}P^{(v)}\ket\psi
\le \frac{n-2\abs{W}}{n}\cdot 1 +\frac{\abs{W} }{n} \cdot(2-\delta/4)
= 1-\frac{\abs{W}\delta/4}{n}
\le 1-\frac{\delta}{8k}.
\]
\end{proof}

\begin{lemma}\label{thm:spectra of Mv1}
    For every internal node $u \in V$, define the operator
    \[
    M^{(u)} \defeq \ketbraa{A}_{u} + \ketbraa{\Sigma}_{u}.
    \]
    Then
    $\lambda_{\max}(M^{(u)}) \le 2 -\delta/2$.
\end{lemma}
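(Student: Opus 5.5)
The operator $M^{(u)}$ acts nontrivially only on the single qudit $u$, as the sum of two rank-one projectors $\ketbraa{A}$ and $\ketbraa{\Sigma}$ on $\C^q$. Its spectrum is therefore determined by the $2\times 2$ Gram structure of the two unit vectors. The plan is to use the standard fact that for unit vectors $\ket a,\ket b$, the operator $\ketbraa{a}+\ketbraa{b}$ has nonzero eigenvalues exactly $1\pm\abs{\braket{a}{b}}$. This follows by restricting to $\mathrm{span}\{\ket a,\ket b\}$, or by noting that its nonzero spectrum equals that of the Gram matrix
\[
\begin{pmatrix}1 & c\\ c & 1\end{pmatrix},\qquad c=\braket{A}{\Sigma}.
\]
Tensoring with the identity on the other registers does not change the maximal eigenvalue. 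Hence
\[
\lambda_{\max}(M^{(u)}) = 1 + \braket{A}{\Sigma}.
\]

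Next we compute the overlap. Both states are uniform subset states and $A\subseteq\Sigma$, so
\[
\braket{A}{\Sigma}=\frac{\abs{A}}{\sqrt{\abs{A}\,q}}=\sqrt{\abs{A}/q}.
\]
With $\abs{A}=\lfloor(1-\delta)q\rfloor\le(1-\delta)q$, this gives $\braket{A}{\Sigma}\le\sqrt{1-\delta}\le 1-\delta/2$, using $\sqrt{1-x}\le 1-x/2$ for $x\in[0,1]$. Therefore $\lambda_{\max}(M^{(u)})\le 2-\delta/2$.

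The only point needing care is the eigenvalue fact, which should be justified briefly. Write any vector as $\alpha\ket a+\beta\ket b+\ket{w}$ with $\ket w$ orthogonal to both. On $\ket w$ the operator vanishes, and on the two-dimensional span the eigenvectors are $\ket a\pm\ket b$ with eigenvalues $1\pm c$ for real nonnegative $c$, which is our case since the entries are nonnegative. A quick alternative is the bound $\norm{M}\le 1+c$, from $\bra\phi M\ket\phi=\abs{\braket{a}{\phi}}^2+\abs{\braket{b}{\phi}}^2$ and the fact that two unit vectors at angle $\theta$ have maximal total squared projection $1+\cos\theta$. We also need $A$ nonempty so that $\ket A$ is well defined. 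This holds since $q\ge 2/(1-\delta)^2$ forces $(1-\delta)q\ge 2/(1-\delta)\ge 2$, so $\abs{A}\ge 1$.
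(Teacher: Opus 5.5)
Your proposal is correct and follows the paper's proof. The paper also computes $\mu=\braket{A}{\Sigma}=\sqrt{|A|/|\Sigma|}\le\sqrt{1-\delta}\le 1-\delta/2$, notes that $\ket{A}\pm\ket{\Sigma}$ are eigenvectors with eigenvalues $1\pm\mu$ and that $M^{(u)}$ vanishes on the orthogonal complement of their span, and concludes $\lambda_{\max}(M^{(u)})=1+\mu\le 2-\delta/2$; your extra checks (that $A$ is nonempty, and that tensoring with the identity leaves $\lambda_{\max}$ unchanged) are details the paper leaves implicit.
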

\begin{proof}
    First, observe that the relative size of $A \subseteq \Sigma$ is bounded by
    \[
    \mu \defeq \braket{A}{\Sigma} = \sqrt{\abs{A}/ \abs{\Sigma}} \le \sqrt{1-\delta} \le 1-\delta/2.
    \]    
    Now observe that
    \begin{align*}
    &M^{(u)} \ps{\ket A+\ket{\Sigma}}=(1+\mu)\ps{\ket A+\ket{\Sigma}},\\
    &M^{(u)} \ps{\ket A-\ket{\Sigma}}=(1-\mu)\ps{\ket A-\ket{\Sigma}}.
    \end{align*}
    On the other hand, $M^{(u)}\ket\chi=0$ for every state $\ket \chi$ orthogonal to $\textrm{span}\set{\ket A, \ket{\Sigma}}$. Hence $M^{(u)}$'s eigenvalues are $\set{1\pm\mu, 0}$, and since $\mu > 0$, the maximal eigenvalue is $\lambda_{\max}(M^{(u)}) \le 1 + \mu \le 2-\delta/2$.
\end{proof}

\subsection{Doubly Exponentially Many Sequentially Frustrated Sequences}
\newcommand{\nicesequences}{\mathsf{NiceOverlapSeq}}
\begin{definition}\label{def:sequentially frustrated}
Let $x\in \Sigma^n$. A sequence of terms $(Q_1,\ldots,Q_t)$ is \emph{sequentially $\delta$-frustrated with respect to $x$} if, for every $j\in[t]$:
\[
    \norm{Q_j \cdot \wh{\ps{ Q_{j-1}\cdots Q_1\ket{x} }}}_2^2 \le 1 - \delta.
\]
\end{definition}

Fix a term $P^{(u)}\in\mathcal P$ and a length bound $\ell\ge1$.
\begin{definition}
    A sequence $(P^{(v_1)},\ldots,P^{(v_t)})$ is \emph{nice below $P^{(u)}$} if its nodes form an ordered top-down subtree below $u$. Equivalently, for every $j\in[t]$, the parent of $v_j$ is either $u$, or some $v_{j'}$ with $j'<j$.
\end{definition}
The set of such sequences having at most $\ell$ applied terms is denoted by
\[
\nicesequences_{P^{(u)},\ell} = \sett{(P^{(v_1)},\ldots,P^{(v_t)})}{0\le t\le\ell\text{ and the sequence is nice below $P^{(u)}$}}.
\]

We first observe that the number of nice sequences is large.
\begin{lemma}\label{thm:nice overlap min size}
Let $\ell_\varepsilon\ge2$. For every node $v$ with at least $\ell_\varepsilon-1$ levels below it,
\[
\abs{\nicesequences_{P^{(v)},\ell_\varepsilon}}
\ge 
2^{\ell_\varepsilon/(2k)}.
\]
\end{lemma}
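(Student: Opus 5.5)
We lower-bound the count by an explicit injection from binary strings into nice sequences. Write $L=\ell_\varepsilon$ and $m=\lfloor L/2\rfloor\ge 1$. The plan is to build, for every subset choice along a fixed ``spine'', a distinct nice sequence with at most $L$ applied terms. This gives at least $2^{m}$ sequences, and since $k\ge2$ we have $m\ge L/2-1/2\ge L/(2k)$ whenever $L\ge2$ (check the case $L=2$: $m=1\ge 2/(2k)$; for $L\ge3$, $L/2-1/2\ge L/4\ge L/(2k)$).

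\textbf{The spine.} Since $v$ has at least $L-1$ levels below it, fix the path $w_1=v1$, $w_2=v11$, \dots, $w_{m}=v1^{m}$. This requires $m\le L-1$ levels, which holds. For each $j\in[m]$ let $s_j\defeq w_{j-1}2$ (with $w_0=v$) be the second child of $w_{j-1}$. This is a sibling of $w_j$, exists because $k\ge2$, and lies at depth at most $m$ below $v$.

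\textbf{The injection.} For $b\in\{0,1\}^{m}$ define the sequence listing, for $j=1,\dots,m$ in order, first $P^{(w_j)}$ and then, if $b_j=1$, also $P^{(s_j)}$.

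\textbf{Niceness.} The sequence is nice below $P^{(v)}$: the parent of $w_j$ is $w_{j-1}$ (or $v$ when $j=1$), which appears earlier; the parent of $s_j$ is likewise $w_{j-1}$ (or $v$), which appears earlier.

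\textbf{Length.} The number of applied terms is $m+|b|\le 2m\le L$.

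\textbf{Distinctness.} The set of nodes determines $b$, since $b_j=1$ iff $s_j$ appears. Hence distinct $b$ give distinct sequences.

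Therefore $\abs{\nicesequences_{P^{(v)},L}}\ge 2^{m}\ge 2^{L/(2k)}$.

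The main obstacle is the depth requirement, namely ensuring every node used actually exists in the tree. The spine uses at most $m\le L-1$ levels and the $s_j$ lie no deeper than the spine, so the hypothesis of ``$L-1$ levels below $v$'' suffices. No leaf-versus-internal subtlety arises, because nice sequences only require parent relations.

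A richer count with branching factor $k$ would give $2^{\Omega(L)}$ more strongly, but the simple spine construction already meets the stated bound. If the intended convention counts the root $P^{(v)}$ as one of the $L$ terms, we instead take $m=\lfloor (L-1)/2\rfloor$ and recheck $m\ge L/(2k)$. For $k\ge2$ this inequality can fail for small $L$, e.g. $L=2$, where $m=0$. In that case we patch with a direct count: using only the children of $v$ gives $\ge 1+k\ge 2$ sequences, which already exceeds $2^{L/(2k)}$ for $L\le 4$. For larger $L$, $(L-2)/2\ge L/4\ge L/(2k)$ holds.
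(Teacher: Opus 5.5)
Your argument is correct, but it builds a different family of sequences than the paper does. The paper takes the largest $h$ with $1+k+\cdots+k^h\le\ell_\varepsilon$, fixes the complete depth-$h$ subtree at $v$, and toggles each of its $k^h$ leaves independently. This gives $2^{k^h}$ nice sequences, and maximality of $h$ gives $k^h>\ell_\varepsilon/(2k)$. You instead fix the path $v1,v11,\ldots,v1^m$ with $m=\lfloor\ell_\varepsilon/2\rfloor$ and toggle one sibling $w_{j-1}2$ per path node. All your checks go through:
\begin{itemize}
\item niceness;
\item the length bound $m+|b|\le 2m\le\ell_\varepsilon$;
\item injectivity;
\item existence of every node, since all lie at depth at most $m\le\ell_\varepsilon-1$;
\item the inequality $\lfloor\ell_\varepsilon/2\rfloor\ge\ell_\varepsilon/(2k)$ for $k\ge2$.
\end{itemize}

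What your route buys is a stronger, $k$-independent count $2^{\lfloor\ell_\varepsilon/2\rfloor}$. It also avoids the paper's degenerate case $h=0$: when $\ell_\varepsilon\le k$, ``subsets of the leaves'' of the one-node tree $\{v\}$ do not literally yield two distinct sequences, so a separate trivial check is needed there.

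What the paper's route buys is economy of depth. Its subtree uses only about $\log_k\ell_\varepsilon$ levels below $v$, whereas your spine consumes $\lfloor\ell_\varepsilon/2\rfloor$ levels and so genuinely relies on the linear-depth hypothesis. This matters for the construction, whose tree has only about $\log_k n$ levels, which can be fewer than $\ell_\varepsilon$ in the regime of interest. The paper's argument therefore survives a much weaker depth hypothesis.

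Finally, your closing paragraph on an alternative convention is unnecessary. By the definition of nice sequences, the root $P^{(v)}$ is never among the listed terms: a listed node's parent may be the root itself, which is not listed. So your first count already suffices.
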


\begin{proof}
Let $h$ be maximal such that the complete depth-$h$ subtree rooted at $v$ contains at most $\ell_\varepsilon$ nodes, namely
\[
1+k+\cdots+k^h\le\ell_\varepsilon.
\]
Every subset of the $k^h$ leaves determines a distinct subtree of size at most $\ell_\varepsilon$, and hence a distinct nice overlapping sequence.
By the maximality of $h$ we have $\ell_\varepsilon \le 2k^{h+1},$ and hence $k^h>\ell_\varepsilon/(2k)$. Therefore, the number of nice overlapping sequences of length at most $\ell_\varepsilon$ is at least
\[
2^{k^h}
\ge
2^{\ell_\varepsilon/(2k)}.
\]
\end{proof}

Next, we prove that all nice overlapping sequences are $\delta$-frustrated with respect to some witness $x \in \Sigma^n$.

\begin{lemma}\label{lem:frustrated-any-root}
For every $P\in\cP$ and length bound $\ell>0$, if $(Q_1,\ldots,Q_t)\in\nicesequences_{P,\ell}$, then the reversed sequence $(Q_t,\ldots,Q_1)$ is sequentially $\delta$-frustrated with respect to $x=0^n$.
\end{lemma}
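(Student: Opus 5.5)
We prove the claim by tracking the current support set explicitly while the reversed sequence is applied, and checking the frustration of each term at the moment it acts. Write the nice sequence as $(P^{(v_1)},\ldots,P^{(v_t)})$, so that the reversed sequence applies $P^{(v_t)}$ first and $P^{(v_1)}$ last. Niceness says every node's parent appears strictly earlier in the nice order (or is $u$). Hence in the applied (reversed) order every node is processed strictly before its parent, i.e. bottom-up. The key invariant is the following. Immediately before $P^{(v_j)}$ is applied, the current uniform support state is a product state. Register $v_j$ is still in state $\ket{0}$, the registers of already-processed nodes are in $\ket{A}$, and the registers of already-processed children of processed nodes are in $\ket{\Sigma}$.

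\textbf{Step 1 (support after each projector).} For a product subset state $\ket{S}=\bigotimes_w \ket{S_w}$, applying $P^{(v)}=\ketbraa{T_v}$ with $T_v = A\times\Sigma^k$ (on $v,v1,\ldots,vk$) yields a vector proportional to $\ket{T_v}\otimes(\text{rest of }\ket S)$, provided the overlap is nonzero. So the new support is $A$ on $v$, $\Sigma$ on each child $vj$, and unchanged elsewhere. For a leaf the same holds with no children. The overlap is
\[
\braket{A}{S_v}\prod_{j}\braket{\Sigma}{S_{vj}},
\]
and this is nonzero as long as $S_v$ meets $A$, since every $\braket{\Sigma}{\cdot}>0$.

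\textbf{Step 2 (register $v$ is untouched before $P^{(v)}$).} Register $v$ is acted on only by $P^{(v)}$ and $P^{(\parent(v))}$. In the applied order, $P^{(\parent(v))}$ comes after $P^{(v)}$ if the parent is in the sequence; if the parent is $u$ or lies outside the sequence, it is never applied. Each term appears at most once, since the nodes form a subtree. Therefore register $v$ is still $\ket{0}$ when $P^{(v)}$ acts. Because $0\in A$ (note $|A|\ge1$ by the choice of $q$), the overlap in Step 1 is nonzero, so the process never vanishes. By induction on $j$, the support state before each step is a product of $\ket 0$, $\ket A$, $\ket\Sigma$ factors as described.

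\textbf{Step 3 (frustration bound).} Before $P^{(v)}$ the state is $\ket 0_v\otimes\ket{R}$, where $\ket R$ is a product state. Then
\[
\norm{P^{(v)}\ket\psi}^2=|\braket{A}{0}|^2\cdot\prod_j |\braket{\Sigma}{R_{vj}}|^2\le \tfrac{1}{|A|}.
\]
It remains to check that $1/|A|\le 1-\delta$. Here $|A|=\lfloor(1-\delta)q\rfloor$ with $q\ge 2/(1-\delta)^2$, so $(1-\delta)q\ge 2/(1-\delta)$. Using $\lfloor y\rfloor \ge y-1$, we get $|A|\ge 2/(1-\delta)-1\ge 1/(1-\delta)$, since $1/(1-\delta)\ge 1$. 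Hence $1/|A|\le 1-\delta$.

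The main obstacle is this final parameter arithmetic, which must be checked carefully at boundary values. If it fails for some $\delta$, the fallback is to enlarge $q$ slightly, which the construction permits.
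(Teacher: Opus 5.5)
Your proof is correct and is essentially the paper's argument. Both rest on the same observation: the reversed nice sequence is applied bottom-up, and $P^{(\mathsf{parent}(v))}$ is the only other term touching register $v$, so that register is still $\ket{0}$ when $P^{(v)}$ acts. This gives $\norm{P^{(v)}\ket{\wh\Psi}}_2^2\le|\braket{A}{0}|^2=1/|A|\le1-\delta$ by the same floor arithmetic. The paper gets the $1/|A|$ bound from the L\"owner bound $P^{(v)}\preceq\ketbraa{A}_v\otimes I$ rather than from your explicit product-state bookkeeping. Your explicit checks that no term repeats and that the state never vanishes (since $0\in A$) fill in points the paper leaves implicit.
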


\begin{proof}
Fix $P\in\cP$, a length bound $\ell>0$, and a nice sequence $(Q_1, \ldots ,Q_t)\in \nicesequences_{P,\ell}$.
Fix an arbitrary $j \in [t-1]$, and let
\[
\ket\Psi\defeqq Q_{j+1}\cdots Q_t\ket{x}.
\]
We show that
\begin{align}\label{eq:target}
\norm{Q_j\ket{\wh\Psi}}_2^2 \le 1-\delta.
\end{align}
Suppose $P^{(v)} = Q_j$.
The key observation is that, immediately before applying $P^{(v)}$, register $v$ has never been acted upon.
Indeed, the only terms acting on register $v$ are $P^{(v)}$ and, if $v$ is non-root, $P^{(\parent(v))}$. Since the sequence is listed top down, $P^{(\parent(v))}$ appears before $P^{(v)}$ and is therefore applied later; if $P^{(\parent(v))}=P$, then it is not part of the sequence. Hence none of the terms $Q_{j+1},\ldots,Q_t$ acts on register $v$, and therefore
\[
\ket{\wh\Psi}_v=\ket0_v.
\]
On register $v$, the projector $P^{(v)}$ acts as $(P^{(v)})_v=\ketbraa A_v.$
Recall that $P^{(v)}\preceq \ketbraa A_v\otimes I$ (\cref{eq:Pv lowener}). Therefore,
\begin{align*}
    \norm{P^{(v)}\ket{\wh\Psi}}_2^2
&=
\bra{\wh\Psi}P^{(v)}\ket{\wh\Psi} \\
&\le 
\bra{\wh\Psi}(\ketbraa A_v\otimes I)\ket{\wh\Psi} \\
&\le \abs{\braket{A}{0}}^2=\frac1{|A|}.
\end{align*}
Then by our choice $\abs A=\floor{(1-\delta)q}$ and $q\ge 2/(1-\delta)^2$, we have
\begin{align*}
    \abs A
&\ge (1-\delta)q-1
\ge \frac{2}{1-\delta}-1
\ge \frac{1}{1-\delta}.
\end{align*}
and hence
\[
\norm{P^{(v)}\ket{\wh\Psi}}_2^2
\le \frac{1}{\abs A}
\le 1-\delta.
\]
Since $P^{(v)}$ was arbitrary, every term application is sequentially
$\delta$-frustrated, proving \cref{lem:frustrated-any-root}.
\end{proof}

\section*{Acknowledgments}
We thank John Bostanci and Uma Girish for collaboration in the early stages of this project, and Alex Grilo and Dorit Aharonov for helpful discussions. This work began during a visit to Henry Yuen's group, and we thank him for his hospitality.

\paragraph{AI Disclosure.}
This research was conducted by the human authors with assistance from AI tools. Claude and ChatGPT were used as research and writing assistants for brainstorming and proof exploration. ChatGPT was used during the development of the improvement from the initial \(\beta=\Omega(1/\sqrt{\log\log n})\) algorithm to the final \(\beta=\Omega(1/\log\log n)\) result, and Claude assisted with exploratory ideas related to \cref{sec:example delta valid cones}.
The authors verified the correctness and originality of all content including references.

\printbibliography

@inproceedings{AG19,
  title={Stoquastic PCP vs. randomness},
  author={Aharonov, Dorit and Grilo, Alex Bredariol},
  booktitle={2019 IEEE 60th Annual Symposium on Foundations of Computer Science (FOCS)},
  pages={1000--1023},
  year={2019},
  organization={IEEE}
}

@misc{BBT06,
      title={Merlin-Arthur Games and Stoquastic Complexity}, 
      author={Sergey Bravyi and Arvid J. Bessen and Barbara M. Terhal},
      year={2006},
      eprint={quant-ph/0611021},
      archivePrefix={arXiv},
      primaryClass={quant-ph},
      url={https://arxiv.org/abs/quant-ph/0611021}, 
}

@article{KVM02,
  title={Graph nonisomorphism has subexponential size proofs unless the polynomial-time hierarchy collapses},
  author={Klivans, Adam R. and van Melkebeek, Dieter},
  journal={SIAM Journal on Computing},
  volume={31},
  number={5},
  pages={1501--1526},
  year={2002},
  publisher={SIAM}
}

@article{IKW02,
  title={In search of an easy witness: Exponential time vs. probabilistic polynomial time},
  author={Impagliazzo, Russell and Kabanets, Valentine and Wigderson, Avi},
  journal={Journal of Computer and System Sciences},
  volume={65},
  number={4},
  pages={672--694},
  year={2002},
  publisher={Elsevier}
}

@inproceedings{IW97,
  title={P= BPP if E requires exponential circuits: Derandomizing the XOR lemma},
  author={Impagliazzo, Russell and Wigderson, Avi},
  booktitle={Proceedings of the twenty-ninth annual ACM symposium on Theory of computing},
  pages={220--229},
  year={1997}
}

@article{NW94,
  title={Hardness vs.\ randomness},
  author={Nisan, Noam and Wigderson, Avi},
  journal={Journal of Computer and System Sciences},
  volume={49},
  number={2},
  pages={149--167},
  year={1994},
  publisher={Elsevier}
}

@article{BT10,
author = {Bravyi, Sergey and Terhal, Barbara},
title = {Complexity of Stoquastic Frustration-Free Hamiltonians},
journal = {SIAM Journal on Computing},
volume = {39},
number = {4},
pages = {1462-1485},
year = {2010},
doi = {10.1137/08072689X},
URL = {https://doi.org/10.1137/08072689X},
eprint = {https://doi.org/10.1137/08072689X}
}

@article{aharonov2013guest,
  title={Guest column: the quantum PCP conjecture},
  author={Aharonov, Dorit and Arad, Itai and Vidick, Thomas},
  journal={Acm sigact news},
  volume={44},
  number={2},
  pages={47--79},
  year={2013},
  publisher={ACM New York, NY, USA}
}

@article{aharonov2025stoqma,
  title={StoqMA vs. MA: the power of error reduction},
  author={Aharonov, Dorit and Grilo, Alex B and Liu, Yupan},
  journal={Quantum},
  volume={9},
  pages={1853},
  year={2025},
  publisher={Verein zur F{\"o}rderung des Open Access Publizierens in den Quantenwissenschaften}
}

@article{klassen2019two,
  title={Two-local qubit Hamiltonians: when are they stoquastic?},
  author={Klassen, Joel and Terhal, Barbara M},
  journal={Quantum},
  volume={3},
  pages={139},
  year={2019},
  publisher={Verein zur F{\"o}rderung des Open Access Publizierens in den Quantenwissenschaften}
}

@article{bravyi2006complexity,
  title={The complexity of stoquastic local Hamiltonian problems},
  author={Bravyi, Sergey and DiVincenzo, David P and Oliveira, Roberto and Terhal, Barbara M},
  journal={Quantum Information \& Computation},
  volume={8},
  number={5},
  pages={361--385},
  year={2008},
  publisher={Rinton Press}
}

@article{piddock2015complexity,
  title={The complexity of antiferromagnetic interactions and 2D lattices},
  author={Piddock, Stephen and Montanaro, Ashley},
  journal={Quantum Information \& Computation},
  volume={17},
  number={7-8},
  pages={636--672},
  year={2017},
  publisher={Rinton Press}
}

@article{bravyi2017complexity,
  title={On complexity of the quantum Ising model},
  author={Bravyi, Sergey and Hastings, Matthew},
  journal={Communications in Mathematical Physics},
  volume={349},
  number={1},
  pages={1--45},
  year={2017},
  publisher={Springer}
}

@article{gharibian2014quantum,
  title={Quantum hamiltonian complexity},
  author={Gharibian, Sevag and Huang, Yichen and Landau, Zeph and Shin, Seung Woo},
  journal={Foundations and Trends{\textregistered} in Theoretical Computer Science},
  volume={10},
  number={3},
  pages={159--282},
  year={2014},
  publisher={Now Publishers}
}

@article{cubitt2016complexity,
  title={Complexity classification of local Hamiltonian problems},
  author={Cubitt, Toby and Montanaro, Ashley},
  journal={SIAM Journal on Computing},
  volume={45},
  number={2},
  pages={268--316},
  year={2016},
  publisher={SIAM}
}

@article{bausch2017complexity,
  title={The complexity of translationally invariant spin chains with low local dimension},
  author={Bausch, Johannes and Cubitt, Toby and Ozols, Maris},
  journal={Annales Henri Poincar{\'e}},
  volume={18},
  number={11},
  pages={3449--3513},
  year={2017},
  organization={Springer}
}

@article{hastings2012trivial,
  title={Trivial low energy states for commuting Hamiltonians, and the quantum PCP conjecture},
  author={Hastings, Matthew B},
  journal={Quantum Information \& Computation},
  volume={13},
  number={5-6},
  pages={393--429},
  year={2013},
  publisher={Rinton Press}
}

@article{bravyi2003commutative,
  title={Commutative version of the k-local Hamiltonian problem and common eigenspace problem},
  author={Bravyi, Sergey and Vyalyi, Mikhail},
  journal={Quantum Information \& Computation},
  volume={5},
  number={3},
  pages={187--215},
  year={2005},
  publisher={Rinton Press}
}

@article{oliveira2005complexity,
  title={The complexity of quantum spin systems on a two-dimensional square lattice},
  author={Oliveira, Roberto and Terhal, Barbara M},
  journal={Quantum Information \& Computation},
  volume={8},
  number={10},
  pages={900--924},
  year={2008},
  publisher={Rinton Press}
}

@article{AGIK09-quantum-on-line,
  title={The power of quantum systems on a line},
  author={Aharonov, Dorit and Gottesman, Daniel and Irani, Sandy and Kempe, Julia},
  journal={Communications in mathematical physics},
  volume={287},
  number={1},
  pages={41--65},
  year={2009},
  publisher={Springer}
}

@article{REG26-geometric-completeness-stoqlh,
  title={Complexity of geometrically local stoquastic Hamiltonians},
  author={Raza, Asad and Eisert, Jens and Grilo, Alex B},
  journal={Quantum},
  volume={10},
  pages={2004},
  year={2026},
  publisher={Verein zur F{\"o}rderung des Open Access Publizierens in den Quantenwissenschaften}
}

@book{KSV02,
  title={Classical and quantum computation},
  author={Kitaev, Alexei Yu and Shen, Alexander and Vyalyi, Mikhail N},
  number={47},
  year={2002},
  publisher={American Mathematical Soc.}
}

\end{document}